\newcommand{\matriz}[1]{\left[\begin{matrix}#1\end{matrix}\right]}
\newcommand{\norma}[1]{\left| \left| #1 \right| \right|}
\renewcommand{\diag}[1]{\operatorname{diag} \left\{ #1 \right\}  }
\newcommand{\traza}[1]{\operatorname{trace}\left\{ #1 \right\} }
\newcommand{\meanaux}[1]{\mathbb{E}\left\{ #1 \right\} }
\newcommand{\meancr}[2]{\mathcal{E}_{#1}\left\{ #2 \right\} }
\renewcommand{\abs}[1]{\left| #1 \right| }
\newcommand{\treq}[0]{\triangleq}
\newcommand{\evalnoll}[3]{ \lpt #1 \right|_{#2=#3} }
\newcommand{\R}[0]{ \mathbb{R}}
\newcommand{\N}[0]{ \mathbb{N}}
\renewcommand{\Z}[0]{ \mathbb{Z}}
\newcommand{\C}[0]{ \mathbb{C}}
\newcommand{\igual}[1]{ \stackrel{_{(#1)}}{=} }
\newcommand{\mayori}[1]{ \stackrel{_{(#1)}}{\geq} }
\newcommand{\dec}[0]{ \mathscr{D} }
\newcommand{\enc}[0]{ \mathscr{E} }
\renewcommand{\cal}[1]{\mathcal{#1}}
\def\qed{\relax\ifmmode\hskip2em \Box\Box\Box\else\unskip\nobreak\hskip1em $\Box\Box\Box$\fi}
\newcommand{\fin}{~$\blacksquare$}
\newcommand{\lpt}{\left.}
\newcommand{\lp}{\left(}
\newcommand{\rp}{\right)}
\def\DHLhksqrt#1#2{\setbox0=\hbox{$#1\sqrt{#2\,}$}\dimen0=\ht0
\advance\dimen0-0.2\ht0
\setbox2=\hbox{\vrule height\ht0 depth -\dimen0}%
{\box0\lower0.4pt\box2}}
\newcommand{\BibPath}{/home/milan/Documents/University/Research/BibTeX}
\newtheorem{assumption}{Assumption}[section]
\newtheorem{corollary}{Corollary}[section]
\newtheorem{theorem}{Theorem}[section]
\newtheorem{lemma}{Lemma}[section]
\newtheorem{definition}{Definition}[section]
\newtheorem{remark}{Remark}[section]
\begin{document}
\title{A Characterization of the Minimal Average Data Rate that Guarantees a Given Closed-Loop Performance Level}%
\author{Eduardo~I.~Silva, Milan S. Derpich, Jan \O stergaard and Marco A. Encina
\thanks{
  E.I Silva, M.S. Derpich and M.A. Encina
    are with the Department of Electronic Engineering, Universidad T\'ecnica
         Federico Santa Mar\'ia, Casilla 110-V, Valpara\'iso, Chile
 (emails: 
milan.derpich@usm.cl, marco.encina@alumnos.usm.cl).
Their work was supported by CONICYT through grants FONDECYT
Nr.~1120468, Nr.~1130459, and Anillo ACT-53.}
 \thanks{%
J. {\O}stergaard is with the Department of Electronic Systems,
Aalborg University, Niels Jernes Vej 12, DK-9220, Aalborg,
Denmark (email: janoe@ieee.org). } }

\maketitle

\begin{abstract}
This paper studies networked control systems closed over
noiseless digital channels.  By focusing on noisy LTI plants
with scalar-valued control inputs and sensor outputs, we derive
an absolute lower bound on the minimal average data rate that
allows one to achieve a prescribed level of stationary
performance under Gaussianity assumptions.  We also present a
simple coding scheme that allows one to achieve average data
rates that are at most $1.254$ bits away from the derived lower
bound, while satisfying the performance constraint.  Our
results are given in terms of the solution to a stationary
signal-to-noise ratio minimization problem and builds upon a
recently proposed framework to deal with average data rate
constraints in feedback systems.  A numerical example is
presented to illustrate our findings.
\end{abstract}

\begin{keywords}
Networked control systems; optimal control; average data rate;
signal-to-noise ratio.
\end{keywords}

\IEEEpeerreviewmaketitle

\section{Introduction}\label{sec:intro}
This paper studies networked control problems for linear
time-invariant (LTI) plants where communication takes place
over a digital communication channel. Such problems have
received much attention in the recent literature
\cite{nafaza07, antbai07}. This interest is motivated by the
theoretical challenges inherent to control problems subject to
data-rate constraints, and by the many practical implications
that the understanding of fundamental limitations in such a
setup may have.

The literature on networked control systems subject to data-rate constraints can be broadly classified into two groups.  A
first group, which includes \cite{delcha90, elimit01, fuxie05,
ishfra02-l, wonbro99, neslib09, brolib00}, uses approaches that
are rooted in nonlinear control theory. An alternative approach
that uses information-theoretic arguments has been adopted in,
e.g., \cite{sahmit06, naieva04, tatmit04a, tasami04, savkin06,
matsav09, chafar08-auto}.  A key question addressed by the
works in the latter group is how to extend, or adapt if
necessary, standard information-theoretic notions to reveal
fundamental limitations in data-rate-limited feedback loops.
Related results have been published in \cite{mardah08,okhais09,
shioht12}, where the interplay between information constraints
and disturbance attenuation is explored.

The most basic question in a data-rate limited feedback control
framework is whether closed-loop stabilization is possible or
not.  
Indeed, stabilization is possible only if the channel data rate
is sufficiently large \cite{wonbro99, bailli02}.  These early
observations spawned several works that study minimal data rate
requirements for stabilization and observability (see, e.g.,
\cite{tatmit04a, tatmit04b, naieva03, fagzam03}). A fundamental
result was presented in \cite{naieva04}.  For noisy LTI plants
controlled over a noiseless digital channels,  it is shown in
\cite{naieva04} that it is possible to find causal coders,
decoders and controllers such that the resulting closed-loop
system is mean-square stable, if and only if the average data
rate is greater than the sum of the logarithm of the absolute
value of the unstable plant poles. A thorough discussion of
this and related work can be found in the survey paper
\cite{nafaza07}. Recent extensions, including stabilization
over time-varying channels, are presented in \cite{youxie11,
micofr13, mifrde09, madael06, yukbas11}.

It is fair to state that stabilization problems subject to data
rate constraints are well-understood.  However, the question of
what is the best closed-loop performance that is achievable
with a given data rate is largely open.  Such problems are
related to causal (and zero-delay) rate-distortion problems
(see, e.g., \cite{neugil82, linzam06, derost12, bomita01,
yuklin12}).  In the latter context, the best results are, to
our knowledge, algorithmic in nature, derived for open-loop
systems and, at times, rely on arbitrarily long delays
\cite{neugil82, linzam06}.  It thus follows that the results in
the above references are not immediately applicable to feedback
control systems.

In the rate-constrained control literature, lower bounds on the
mean-square norm of the plant state have been derived which
show that, when disturbances are present, closed-loop
performance becomes arbitrarily poor when the feedback data
rate approaches the minimum for stability~\cite{naieva04,
nafaza07}.  This result holds no matter how the coder, decoder
and controller are designed. Unfortunately, the bounds in
\cite{naieva04, nafaza07} do not seem to be tight in general.
In contrast, for fully observable noiseless LTI plants with
bounded initial state, \cite{savkin06} shows that one can
(essentially) recover the best non-networked LQR performance
with data rates arbitrarily close to the minimum average data
rate for stabilization. Other results valid in the noiseless or
bounded-support noise cases can be found in, e.g.,
\cite{yukbas06, hunaev05, lemsun06}.

Relevant work on optimal control subject to rate-constraints,
and dealing with unbounded support noise sources, include
\cite{tasami04, nafaza07}.  Those works establish conditions
for separation and certainty equivalence in the context of
quadratic stochastic problems for fully observed plants, when
data rate constraints are present in the feedback path. It is
shown in \cite{nafaza07} that, provided the encoder has a
recursive structure, certainty equivalence and a partial
separation principle hold.  The latter result is relevant.
However, \cite{nafaza07} does not give a practical
characterization of optimal encoding policies.  The results
reported in~\cite{tasami04} share a similar drawback.   Indeed,
performance-related results in \cite{tasami04} are described in
terms of the sequential rate-distortion function, which is
difficult to compute in general.  Moreover, even for the cases
where an expression for such function is available, it is not
clear whether the sequential rate-distortion function is
operationally tight \cite[Section IV-C]{tasami04}. Partial
separation in optimal quantized control problems has been
recently revisited in \cite{fu12}.

Additional results related to the performance of control
systems subject to data-rate constraints are reported
in~\cite{yukbas06, baskjo11} and \cite{mahten09}.  In
\cite{yukbas06}, noiseless state estimation problems subject to
data rate constraints are studied. The case most relevant to
this work uses an asymptotic (in time) quadratic criterion to
measure the state reconstruction error. For such a measure, it
is shown in~\cite{yukbas06} that the bound established in
\cite{naieva04} is sufficient to achieve any prescribed
asymptotic distortion level.  This is achieved, however, at the
expense of arbitrarily large estimation errors for any given
finite time.  On the other hand,~\cite{mahten09} considers
non-linear stochastic control problems over noisy channels, and
a functional (i.e., not explicit) characterization of the
optimal control policies is presented. In turn, \cite{baskjo11}
presents a computationally-intensive iterative method for
encoder and controller design for LTI plants controlled over
noisy discrete memoryless channels. Conditions for separation
and certainty equivalence are also discussed in \cite{baskjo11}
for some specific setups.

In this paper, we focus on the feedback control of noisy LTI
plants, with one-dimensional control inputs and sensor outputs,
that are controlled over a noiseless (and delay-free) digital
channel. By considering causal but otherwise unconstrained
coding schemes, we study the minimal (operational) average data
rate, say $\mathscr{R}(D)$, that guarantees that the
steady-state variance of an error signal is below a
prespecified level $D>0$.   By assuming that the plant initial
state and the disturbances are jointly Gaussian, our first
contribution is to show that a lower bound on $\mathscr{R}(D)$
can be obtained by minimizing the directed information rate
\cite{massey90} across an auxiliary zero-delay coding scheme
that behaves as an LTI system plus additive white Gaussian
noise. For doing so, we build upon \cite{sideos11-TAC} and make
use of information-theoretic arguments that complement previous
results in \cite{mardah05, mardah08}. Motivated by our first
result, and as a second contribution, we generalize the class
of randomized coding schemes proposed in \cite{sideos11-TAC}
(see also \cite{zamfed92}) and use the coding schemes so
obtained, to characterize an upper bound on $\mathscr{R}(D)$.
Whilst not tight in general, the gap between the derived upper
and lower bounds is smaller than (approximately) $1.254$ bits
per sample. Our results are constructive and given in terms of
the solution to a signal-to-noise ratio (SNR) constrained
optimal control problem (see also \cite{brmifr07,
sigoqu10-auto,johann11}). We also propose a specific randomized
coding scheme that achieves the prescribed level of performance
$D$, while incurring an average data rate that is strictly
smaller than the derived upper bound on $\mathscr{R}(D)$.

This paper extends our works \cite{sideos11-TAC, sideos11-TN}
in at least two respects.  First, this paper considers LTI
plants that are not constrained to be stabilizable by unity
feedback (or said otherwise, we do not exploit any predesigned
controller for the plant). 
Second, we construct a universal
lower bound on the minimal average data rate that guarantees a
prescribed performance level which cannot be derived from the
arguments used in \cite{sideos11-TAC}. Indeed, the results in
\cite{sideos11-TAC} and \cite{sideos11-TN} are valid only when
a specific class of source coding schemes is employed. Here, we
do not, \emph{a priori}, constrain the type, structure or
complexity of the considered source coding schemes.

The remainder of this paper is organized as follows: Section
\ref{sec:setup-prob-def} describes the problem addressed in the
paper.  
Section~\ref{sec:lower} presents a lower bound on the
minimal average data rate that guarantees a given performance
level, whilst Section \ref{sec:upper-linear-coding} presents
the corresponding upper bound.  Section~\ref{sec:computing-and-approximations} discusses how to solve
the related SNR-constrained control problem characterizing both
our upper and lower bounds, and comments on implementation
issues. Finally, Section~\ref{sec:ejemplo} presents a numerical
example, and Section \ref{sec:conclusions} draws conclusions.
Early versions of part of the results in this paper were
reported in \cite{sideos10-necsys}.

\emph{Notation:}  $\R$ denotes the set of real numbers, $\R^+$
denotes the set of strictly positive real numbers, $\R_0^+\treq
\R^+\cup\{0\}$, $\N_0\treq \{0,1,\cdots\}$.  In this paper,
$\log$ stands for natural logarithm, and $\abs{x}$ for the
magnitude (absolute value) of $x$.  We work in discrete time
and use $k$ for the time index.  An LTI filter $X$ is said to
be proper (i.e., causal) if its transfer function $X(z)$
remains finite when $z\to\infty$, and it is said biproper if it
is proper and $\lim_{z\to\infty} X(z) \neq 0$.  We define the
set $\cal{U}_{\infty}$ as the set of all proper and stable
filters with inverses that are also stable and proper.


In this paper, all random processes are defined for $k\in\N_0$.
All random variables and processes are assumed to be
vector-valued, unless stated otherwise. Given a process $x$, we
denote its $k^{th}$ sample by $x(k)$ and use $x^k$ as shorthand
for $x(0), \dots, x(k)$. We say that a random process is a
second-order one if it has first- and second-order moments that
are bounded for every $k$ and that also remain bounded as
$k\to\infty$. Gaussian processes are, by definition,
second-order ones \cite{doob53}. We use $\mathbb{E}$ to denote
the expectation operator.  A process $x$ is said to be
asymptotically wide-sense stationary (AWSS) if and only if
there exist $\mu_x$ and a function $R_x(\tau)$, both
independent of the statistics of $x(0)$, such that
$\lim_{k\to\infty} \meanaux{x(k)}=\mu_x$ and $\lim_{k\to\infty}
\mathbb{E} \{ \lp x(k+\tau) - \meanaux{x(k+\tau)} \rp \lp x(k)
- \meanaux{x(k)} \rp^{T} \} = R_x(\tau)$ for every
$\tau\in\N_0$. The steady-state spectral density of an AWSS
process is denoted by $S_x$ (and defined as the Fourier
transform of $R_x(\tau)$ extended for $\tau<0$ according to
$R_x(\tau)=R_x(-\tau)^T)$.  The corresponding steady-state
covariance matrix is denoted by $P_x$, and $\sigma_x^2 \treq
\traza{P_x}$.  Jointly second-order and jointly AWSS processes
are defined in the obvious way.  Appendix \ref{ap:info-theory}
recalls some useful notation and results from Information
Theory \cite{covtho06}.

\section{Problem Setup}\label{sec:setup-prob-def}
This paper focuses on the networked control system (NCS) of
Figure~\ref{fig:ncs-general}. In that figure, $P$ is an LTI
plant, $u\in\R$ is the control input, $y\in\R$ is a sensor
output, $e\in\R^{n_e}$ is a signal related to closed-loop
performance, and $d\in\R^{n_d}$ is a disturbance. The feedback
path in Figure~\ref{fig:ncs-general} comprises a digital
channel and thus quantization becomes mandatory. This task is
carried out by an encoder whose output corresponds to a
sequence of binary words. 
These words are then transmitted over
the channel, and mapped back into real numbers by a decoder.
The encoder and decoder also embody a controller for the plant.

\begin{figure}
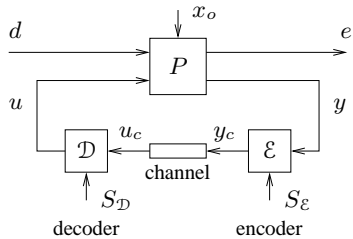

\centering
\input ncs.pstex_t
\caption{Networked control system where communication takes place over a digital channel.}\label{fig:ncs-general}
\end{figure}

We partition $P$ in a way such that
\begin{align}\label{eq:P-partida}
\matriz{ e \\ y } = \matriz{ P_{11} & P_{12} \\ P_{21} & P_{22} } \matriz{ d \\ u },
\end{align}
where $P_{ij}$ are proper transfer functions of suitable
dimensions.  We will make use of the following assumptions.

\begin{assumption}\label{assu:G}
$P$ is a proper LTI plant, free of unstable hidden modes, such
that the open-loop transfer function from $u$ to $y$ (i.e.,
$P_{22}$ in \eqref{eq:P-partida}) is single-input single-output
and strictly proper.  The initial state of the plant, say
$x_o$, and the disturbance $d$ are jointly Gaussian, $d$ is
zero-mean white noise with unit variance $P_d=I$, and $x_o$ has
finite differential entropy (i.e., the variance of $x_o$ is
positive definite). \fin
\end{assumption}

We focus on error-free zero-delay digital channels and denote
the channel input alphabet by $\cal{C}$, a countable set of
prefix-free binary words \cite{covtho06}.   Whenever the
channel input symbol $y_c(k)$ belongs to $\cal{C}$, the
corresponding channel output is given by $u_c(k)=y_c(k)$.   The
expected length of $y_c(k)$ is denoted by $R(k)$, and the
average data rate across the channel is thus defined
as\footnote{We measure $\mathscr{R}$ in nats per sample (recall
that $\log 2$ nats correspond to $1$ bit).}
\begin{align}
\mathscr{R}\treq \lim_{k\to\infty} \frac{1}{k} \sum_{i=0}^{k-1} R(i).
\end{align}

We assume the encoder to be an arbitrary (hence possibly
nonlinear and time-varying) causal system such that the channel
input $y_c$ satisfies
\begin{align}\label{eq:enc}
y_c(k) = \enc_k(y^k, S_{\enc}^k),
\end{align}
where $\alpha^k$ is shorthand for $\alpha(0), \dots,
\alpha(k)$, $S_{\enc}(k)$ denotes side information that becomes
available at the encoder at time instant $k$, and $\enc_k$ is a
(possibly nonlinear and time-varying) deterministic mapping
whose range is a subset of $\cal{C}$.  Similarly, we assume
that the decoder is such that the plant input $u$ is given by
\begin{align}\label{eq:dec}
u(k) =  \dec_k(u_c^k, S_{\dec}^k),
\end{align}
where $S_{\dec}(k)$ denotes side information that becomes
available at the decoder at time instant $k$, and $\dec_k$ is a
(possibly non-linear and time-varying) deterministic mapping.

\begin{assumption}\label{assu:coding-schemes}
The systems $\enc$ and $\dec$ in Figure \ref{fig:ncs-general}
are causal, possibly time-varying or non-linear, described by \eqref{eq:enc}--\eqref{eq:dec}.
The side information sequences $S_{\enc}$ and $S_{\dec}$ are
jointly independent of $(x_o, d)$, and the decoder is
invertible upon knowledge of $u^i$ and $S_\dec^i$, i.e.,
$\forall i\in\N_0$, there exists a deterministic mapping $g_i$
such that $u_c^{i}=g_i(u^{i}, S_{\dec}^{i})$. \fin
\end{assumption}

The assumption on the side information sequences is motivated
by the requirement that (causal) encoders and decoders use only
past and present input values, and additional information not
related to the message being sent, to construct their current
outputs (see also page 5 in \cite{massey90}). 
On the other
hand, if, for some encoder $\enc$ and decoder $\dec$, the
decoder is not invertible, then one can always define an
alternative encoder and decoder pair, where the decoder is
invertible, yielding the same input-output relationship as
$\enc$ and $\dec$, but incurring a lower average data rate
\cite[Lemma 4.1]{sideos11-TAC}. Accordingly, one can focus,
without loss of generality, on encoder-decoder pairs where the
decoder is invertible.

In this paper, we adopt the following notion of stability (see
also \cite{cofrma05}):

\begin{definition}\label{def:estabilidad}
We say that the NCS of Figure \ref{fig:ncs-general} is
asymptotically wide-sense stationary (AWSS) if and only if the
state of the plant $x$, the output $y$, the control input $u$,
and the disturbance $d$, are jointly second-order AWSS
processes. \fin
\end{definition}

\begin{remark}
The notion of stability introduced above is stronger than the
usual notion of mean-square stability (MSS) where only
$\sup_{k\in \N_0} \meanaux{ x(k)x(k)^T } < \infty$ is required
to hold (see, e.g., \cite{naieva04}). 
\fin
\end{remark}

The goal of this paper is to characterize, for the NCS of
Figure \ref{fig:ncs-general}, the minimal average data rate
$\mathscr{R}$ that guarantees a given performance level as
measured by the steady-state variance of the output $e$.  We
denote by $D_{\inf}$ the infimal steady-state variance of $e$
that can be achieved by setting $u(k)=\mathscr{K}_k(y^k)$, with
$\mathscr{K}_k$ being a (possibly nonlinear and time-varying)
deterministic mapping, under the constraint that the resulting
feedback loop AWSS.  With this definition, we formally state
the problem of interest in this paper as follows:  Find, for
any $D\in (D_{\inf}, \infty)$ and whenever Assumption
\ref{assu:G} holds,\footnote{In this paper we adhere to the
convention that an unfeasible minimization problem has an
optimal value equal to $+\infty$ \cite{boyvan04}.}
\begin{align}\label{eq:def-RD}
\mathscr{R}(D) \treq \inf_{ \sigma_e^2\leq D } \mathscr{R},
\end{align}
where $\sigma_e^2\treq \traza{P_e}$, $P_e$ is the steady-state
covariance matrix of $e$, and the optimization is carried out
with respect to all encoders $\enc$ and decoders $\dec$ that
satisfy Assumption \ref{assu:coding-schemes} and render the
resulting NCS AWSS.

It can be shown that the problem in \eqref{eq:def-RD} is
feasible for every $D\in (D_{\inf}, \infty)$ (see Appendix
\ref{ap:factibilidad}). If $D<D_{\inf}$, then the problem is
clearly unfeasible. On the other hand, achieving $D=D_{\inf}$
incurs an infinite average data rate, except for very special
cases. We will thus focus on $D\in (D_{\inf}, \infty)$ without
loss of generality.

The remainder of this paper characterizes $\mathscr{R}(D)$
within a gap smaller than (approximately) $1.254$ bits per
sample. Such characterization is given in terms of the solution
to a constrained quadratic optimal control problem.  We also
propose encoders and decoders which achieve an average data
rate within the above gap, while satisfying the performance
constraint on the steady-state variance of $e$.

\section{An Information-Theoretic lower bound on $\mathscr{R}(D)$}\label{sec:lower}
This section shows that a lower bound on $\mathscr{R}(D)$ can
be obtained by minimizing the directed information rate across
an auxiliary coding scheme comprised of LTI systems and an
additive white Gaussian noise channel with feedback.  The
starting point of our presentation is a result in
\cite{sideos11-TAC}.

\begin{theorem}[Theorem 4.1 in \cite{sideos11-TAC}]\label{teo:rates-directed}
Consider the NCS of Figure \ref{fig:ncs-general} and suppose
that Assumptions \ref{assu:G} and \ref{assu:coding-schemes}
hold. Then,
\begin{align}\label{eq:def-directed-info}
\mathscr{R} \geq I_{\infty}( y \to u) \treq
\lim_{k\to\infty} \frac{1}{k} \sum_{i=0}^{k-1} I( u(i); y^i | u^{i-1} ),
\end{align}
where $I( \, \cdot \, ; \, \cdot \, | \, \cdot \, )$ denotes
conditional mutual information (see Appendix
\ref{ap:info-theory}). \fin
\end{theorem}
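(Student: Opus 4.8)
The plan is to establish, for every finite horizon $k$, the finite-horizon inequality
\begin{align}\label{eq:prop-finite}
\sum_{i=0}^{k-1} R(i)\;\geq\;\sum_{i=0}^{k-1} I(u(i);y^i\mid u^{i-1}),
\end{align}
and then to divide by $k$ and pass to the limit (if $\mathscr{R}=\infty$ there is nothing to prove, so I may assume the left-hand side finite, which in particular makes $H(y_c^{k-1})$ finite for each $k$). The first ingredient is combinatorial: the channel input $y_c(i)$ takes values in the fixed prefix-free codebook $\cal{C}$, so Kraft's inequality holds for $\cal{C}$, and hence for \emph{any} conditioning event $E$ the expected codeword length obeys $\meanaux{\ell(y_c(i))\mid E}\geq H(y_c(i)\mid E)$, where $\ell$ is the length function (normalized so that lengths and entropies are in the same units). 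Taking $E=\{y_c^{i-1}=\cdot\}$ and averaging gives $R(i)\geq H(y_c(i)\mid y_c^{i-1})$, and summing, the telescoping chain rule for entropy yields $\sum_{i=0}^{k-1}R(i)\geq H(y_c^{k-1})$. Since the channel is error- and delay-free, $u_c^{k-1}=y_c^{k-1}$, so $\sum_{i=0}^{k-1}R(i)\geq H(u_c^{k-1})$.

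The heart of the argument is then to show $H(u_c^{k-1})\geq\sum_{i=0}^{k-1} I(u(i);y^i\mid u^{i-1})$. Writing $H(u_c^{k-1})=\sum_{i=0}^{k-1} H(u_c(i)\mid u_c^{i-1})\geq\sum_{i=0}^{k-1} I(u_c(i);y^i\mid u_c^{i-1})$, it suffices to prove the data-processing inequality for directed information
\begin{align}\label{eq:prop-dpi}
\sum_{i=0}^{k-1} I(u_c(i);y^i\mid u_c^{i-1})\;\geq\;\sum_{i=0}^{k-1} I(u(i);y^i\mid u^{i-1}),
\end{align}
i.e.\ that processing the channel output $u_c$ into the plant input $u$ cannot increase the information flow towards $y$. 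This is where the structural assumptions enter: (i) each $u(i)=\dec_i(u_c^i,S_{\dec}^i)$ is a causal deterministic function of $(u_c^i,S_{\dec}^i)$; (ii) by Assumption~\ref{assu:coding-schemes} the decoder is invertible, so $\sigma(u^i,S_{\dec}^i)=\sigma(u_c^i,S_{\dec}^i)$ and replacing $u_c^{i-1}$ by $u^{i-1}$ in the conditioning loses nothing; and (iii) by Assumption~\ref{assu:coding-schemes} the side information is jointly independent of $(x_o,d)$, while by Assumption~\ref{assu:G} the loop is causal (strictly proper $P_{22}$), so $y^i$ is a function of $x_o$, $d^i$ and past side information only. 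Conditioning all the mutual informations on the side-information histories, applying the chain rule and non-negativity of conditional mutual information, and finally removing the side-information conditioning by means of (iii), establishes \eqref{eq:prop-dpi}; chaining it with the previous displays gives \eqref{eq:prop-finite}.

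Dividing \eqref{eq:prop-finite} by $k$ and letting $k\to\infty$, the left-hand side converges to $\mathscr{R}$ by definition and the right-hand side to $I_{\infty}(y\to u)$, which is the claim. I expect the main obstacle to be \eqref{eq:prop-dpi}: making the directed-information data-processing step rigorous in the presence of \emph{both} feedback (so that $y^i$ genuinely depends on past $u_c$, past $u$, and past side information) and encoder/decoder side information. The careful bookkeeping — conditioning throughout on the side-information sequences, exploiting their independence of $(x_o,d)$ to discard that conditioning at the end, and using decoder invertibility to interchange $u_c^{i-1}$ and $u^{i-1}$ in the conditioning without loss — is where essentially all of the work lies; the remaining steps are Kraft's inequality, the entropy chain rule, and a Cesàro limit.
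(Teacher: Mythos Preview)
The paper does not supply its own proof of this theorem: it is quoted verbatim as Theorem~4.1 of \cite{sideos11-TAC} and no argument is reproduced here. So there is nothing in the present paper to compare your proposal against beyond the citation.

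On its own merits, your sketch follows the standard route and is sound at the level of a plan: Kraft's inequality on the fixed prefix-free alphabet $\cal{C}$ gives $R(i)\geq H(y_c(i)\mid y_c^{i-1})$, the chain rule collapses the sum to $H(u_c^{k-1})$, and the remaining work is the directed-information data-processing step from $u_c$ to $u$. You correctly flag that last step as the only nontrivial one and you identify the three ingredients that make it go through (causality of $\dec$, decoder invertibility so that $\sigma(u^{i},S_\dec^{i})=\sigma(u_c^{i},S_\dec^{i})$, and independence of the side information from $(x_o,d)$). One caution: your sentence ``$y^i$ is a function of $x_o$, $d^i$ and past side information only'' is the crux, but precisely because $y^i$ \emph{does} depend on past side information, ``removing the side-information conditioning by means of (iii)'' is not a one-line step; the clean way to execute it is to first convert both directed-information sums into ordinary mutual informations with $(x_o,d^{k-1})$ via the identity in Lemma~\ref{lema:directed-estado} (which the paper states for exactly this purpose), and then apply the ordinary data-processing inequality $I(x_o,d^{k-1};u^{k-1})\leq I(x_o,d^{k-1};u_c^{k-1},S_\dec^{k-1})=I(x_o,d^{k-1};u_c^{k-1})$, the last equality using independence of $S_\dec$ from $(x_o,d)$. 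That replaces your term-by-term bookkeeping with a single application of Lemma~\ref{lema:directed-estado} and avoids the delicate issue of adding and then removing side-information conditioning inside a sum of conditional mutual informations.
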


The quantity $I_{\infty}( y \to u)$ corresponds to the directed
information rate \cite{massey90} across the source coding
scheme of Figure \ref{fig:ncs-general} (i.e., between the input
$y$ and the output $u$ of the source coding scheme).  Note that
$I_{\infty}( y \to u)$ is a function of the joint statistics of
$y$ and $u$ only.

We will now derive a lower bound on the directed information
rate across the considered coding scheme, in terms of the
directed information rate that would appear if all the involved
signals were Gaussian.

\begin{lemma}\label{lema:cota-gauss}
Consider the NCS of Figure \ref{fig:ncs-general} and suppose
that Assumptions \ref{assu:G} and \ref{assu:coding-schemes}
hold.  If, in addition, $(x_o,d,y,u)$ are jointly second-order,
then $I_{\infty}( y \to u ) \geq I_{\infty}( y_G \to u_G)$,
where $y_G$ and $u_G$ are such that $(x_o,d,y_G,u_G)$ are
jointly Gaussian with the same first- and second-order (cross-)
moments as $(x_o,d,y,u)$.
\end{lemma}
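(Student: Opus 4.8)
The plan is to exploit the linearity of the plant to reduce the claim to a statement about an exogenous \emph{Gaussian} signal, for which the Gaussian extremal property points in the correct direction. Since $P$ is LTI with $P_{22}$ strictly proper and $(x_o,d)$ jointly Gaussian (Assumption \ref{assu:G}), the sensor output admits the decomposition $y^i = \Gamma_i\, u^{i-1} + w^i$, where $\Gamma_i$ is a fixed strictly lower-triangular matrix of plant Markov parameters and $w^i$ is the response to $(x_o,d^i)$ alone, hence a jointly Gaussian process that does not depend on the coding scheme. Given $u^{i-1}$, the map $w^i\mapsto y^i$ is an invertible affine shift, so $I(u(i);y^i\mid u^{i-1}) = I(u(i);w^i\mid u^{i-1})$ and therefore $I_\infty(y\to u) = I_\infty(w\to u)$. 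This substitution is the crux, and it is precisely what makes $I_\infty(y_G\to u_G)$ a \emph{lower} bound rather than an upper bound: a naive termwise use of $I(u(i);y^i\mid u^{i-1}) = h(u(i)\mid u^{i-1}) - h(u(i)\mid y^i,u^{i-1})$ together with the maximum-entropy property of the Gaussian yields the reverse inequality, because the Gaussian maximizes the first term. Replacing $y$ by the genuinely Gaussian $w$ is what removes this difficulty.

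Next I would show that, because $w$ is exogenous, the directed information equals the mutual information. As $w$ is a deterministic linear function of $(x_o,d)$ and the side-information sequences are independent of $(x_o,d)$ (Assumption \ref{assu:coding-schemes}), there is no feedback path into $w$: conditioned on $w^{i-1}$, the signal $w(i)$ depends only on the future disturbance increments while $u^{i-1}$ depends only on the side information, so $w(i)\indep u^{i-1}\mid w^{i-1}$. By the conservation law for directed information \cite{massey90} this gives $I_\infty(w\to u) = \bar I(w;u)$, the mutual information rate. The same identity holds for the Gaussian surrogate: matched second-order statistics make the relevant partial covariances vanish, and the Gaussianity of $w$ (hence the linearity of $\meanaux{w(i)\mid w^{i-1}}$) guarantees that zero partial covariance is equivalent to the required conditional independence, so $I_\infty(w_G\to u_G) = \bar I(w_G;u_G)$ as well. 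This reduces the lemma to $\bar I(w;u)\ge \bar I(w_G;u_G)$.

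The final step is the Gaussian extremal inequality, now in the favorable orientation. Writing $I(w^n;u^n) = h(w^n) - h(w^n\mid u^n)$, the unconditioned term satisfies $h(w^n) = h(w_G^n)$ \emph{exactly}, since $w$ is itself Gaussian with the prescribed second-order statistics; this is exactly the contribution that previously pointed the wrong way, and it is now an equality. For the conditional term I would subtract the linear MMSE estimate $\hat w^n$ of $w^n$ from $u^n$ (a function of $u^n$, so the conditional entropy is unchanged) and bound $h(w^n\mid u^n) = h(w^n - \hat w^n\mid u^n) \le h(w^n-\hat w^n) \le h(w_G^n\mid u_G^n)$, where the last step uses that the Gaussian maximizes entropy for a fixed covariance and that the linear-MMSE error covariance depends only on the matched second-order statistics and is attained with equality by the jointly Gaussian surrogate. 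Combining the equality with this inequality gives $I(w^n;u^n)\ge I(w_G^n;u_G^n)$; dividing by $n$ and letting $n\to\infty$ then yields $\bar I(w;u)\ge \bar I(w_G;u_G)$, and hence the claim.

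The main obstacle is conceptual rather than computational, namely getting the direction of the Gaussian comparison right. The conditioning on the non-Gaussian past $u^{i-1}$ blocks any clean termwise bound on the directed-information summands, since conditional differential entropies admit no lower bound of the needed form, so the inequality cannot be proved term by term. The resolution is to pass \emph{globally} to the mutual information $\bar I(w;u)$ via exogeneity, where the troublesome large-entropy contribution becomes the entropy of the genuinely Gaussian $w$ and is therefore exact. I would take care to justify exchanging the $n\to\infty$ limit with the inequality (all quantities are Ces\`aro averages of finite mutual informations under the second-order assumption) and to confirm that the surrogate indeed inherits the directed-equals-mutual identity.
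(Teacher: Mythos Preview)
Your approach is correct and is essentially the paper's: both convert the directed information to an ordinary mutual information with a genuinely Gaussian exogenous signal---the paper uses $(x_o,d^{k-1})$ directly via Lemma~\ref{lema:directed-estado}, while you pass through the open-loop response $w$ and Massey's conservation law, which is equivalent since $u^{k-1}$ depends on $(x_o,d^{k-1})$ only through $w^{k-1}$---and then apply the Gaussian extremal inequality $I(\text{Gaussian};z)\ge I(\text{Gaussian};z_G)$ (the paper's Lemma~\ref{lema:gauss-minimiza}, which you re-derive from $h(w^n)=h(w_G^n)$ and the maximum-entropy bound on $h(w^n\mid u^n)$). The only cosmetic difference is that the paper packages the conversion into the single identity $\sum_{i=0}^{k-1} I(u(i);y^i\mid u^{i-1})=I(x_o,d^{k-1};u^{k-1})$ and applies it once in each direction.
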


\begin{proof}
Our claim follows from the following chain of equalities and
inequalities:
\begin{align}\label{eq:dirigida}
\sum_{i=0}^{k-1} I( u(i); y^i | u^{i-1} ) \igual{a} I(x_o, d^{k-1}; u^{k-1})
    \mayori{b} I(x_o, d^{k-1}; u_{G}^{k-1})
    \igual{c}  \sum_{i=0}^{k-1} I( u_G(i); y_G^i | u_G^{i-1} ),
\end{align}
where $(a)$ follows from Assumption \ref{assu:coding-schemes}
and Lemma \ref{lema:directed-estado} with $(x_{1,o},
d_1)=(x_o,d)$, $\bar{y}=y$, $\bar{u}=u$ and $(x_{2,o},
d_2)=(S_{\dec}, S_{\enc})$, $(b)$ follows from Lemma
\ref{lema:gauss-minimiza} in Appendix \ref{ap:info-theory}, and
$(c)$ follows by using Lemma \ref{lema:directed-estado} again.
The result is now immediate from \eqref{eq:def-directed-info}
and \eqref{eq:dirigida}.\fin
\end{proof}

It follows from Theorem \ref{teo:rates-directed} and Lemma
\ref{lema:cota-gauss} that, in order to bound $\mathscr{R}(D)$
from below, it suffices to minimize the directed information
rate that would appear across the source coding scheme of
Figure \ref{fig:ncs-general}, when its input $y$ and output $u$
are jointly Gaussian AWSS processes.\footnote{Note that, given
our definition of $\mathscr{R}(D)$, our focus is precisely on
encoders and decoders that render $(x,y,u,d)$, and hence also
$(y,u)$, jointly second-order AWSS.}

\begin{lemma}\label{lema:directed-gaussiana}
Assume that $u$ and $y$ are jointly Gaussian AWSS processes.
Then,
\begin{align}
I_{\infty}(y\to u) = \frac{1}{4\pi} \int_{-\pi}^{\pi} \log{ \lp \frac{S_u \ejw }{\sigma_n^2} \rp } \, d\omega,
\end{align}
where $S_u$ is the steady-state power spectral density of $u$,
and $\sigma_n^2$ is the steady-state variance of the Gaussian
AWSS sequence of independent random variables $n$, defined via
\begin{align}\label{eq:def-n}
n(k) \treq u(k) - \hat{u}(k), \quad \hat{u}(k) \treq \meanaux{u(k) | y^k, u^{k-1} }.
\end{align}
\end{lemma}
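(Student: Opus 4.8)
The plan is to express the directed information rate $I_\infty(y \to u)$ directly in terms of the one-step prediction errors $n(k)$ defined in \eqref{eq:def-n}, and then use stationarity plus a Kolmogorov–Szeg\H{o}-type formula to turn the resulting limit into a spectral integral. First I would observe that, by the chain rule for mutual information and the definition of conditional mutual information,
\begin{align}\label{eq:plan-chain}
\sum_{i=0}^{k-1} I(u(i); y^i \mid u^{i-1}) = \sum_{i=0}^{k-1} \left( h(u(i) \mid u^{i-1}) - h(u(i) \mid y^i, u^{i-1}) \right),
\end{align}
where $h$ denotes differential entropy; this requires that the relevant conditional densities exist, which follows from the joint Gaussianity assumed in the lemma. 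The second term is, by definition of $n$ and the fact that conditioning by a jointly Gaussian vector makes the conditional distribution Gaussian with a deterministic (conditional) covariance, simply $h(n(i) \mid y^i, u^{i-1}) = h(n(i)) = \tfrac12 \log(2\pi e\, \sigma_{n,i}^2)$, where $\sigma_{n,i}^2 = \mathbb{E}\{n(i)^2\}$. A small lemma is needed here: since $n(i) = u(i) - \hat u(i)$ with $\hat u(i)$ a measurable function of $(y^i, u^{i-1})$, the innovation $n(i)$ is independent of $(y^i, u^{i-1})$ in the Gaussian case, and is itself Gaussian.

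Next I would handle the first term on the right of \eqref{eq:plan-chain}. Because $u$ is a scalar Gaussian AWSS process, the conditional variance $\mathbb{E}\{(u(i) - \mathbb{E}\{u(i)\mid u^{i-1}\})^2\}$ converges, as $i\to\infty$, to the one-step prediction error variance of $u$, which by the Kolmogorov–Szeg\H{o} formula equals $\exp\!\left( \tfrac{1}{2\pi}\int_{-\pi}^{\pi} \log S_u(e^{j\omega})\, d\omega \right)$ (using that AWSS implies the spectral density $S_u$ is well-defined and, for the data rate to be finite, log-integrable). Hence $h(u(i)\mid u^{i-1}) \to \tfrac12 \log\!\left( 2\pi e \cdot \exp\big( \tfrac{1}{2\pi}\int_{-\pi}^{\pi}\log S_u\, d\omega \big)\right)$. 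Likewise, AWSS-ness of $(y,u)$ forces $\sigma_{n,i}^2 \to \sigma_n^2$, the steady-state variance of the innovation sequence $n$ (and one must check $n$ is indeed AWSS with independent components — this is essentially built into the hypothesis that $(y,u)$ are jointly AWSS, since $n$ is obtained from $u$ by a causal, asymptotically time-invariant linear operation). Taking the Ces\`aro limit $\tfrac1k\sum_{i=0}^{k-1}(\cdot)$ of \eqref{eq:plan-chain} then kills the transient and leaves
\begin{align}
I_\infty(y\to u) = \tfrac12\left( \tfrac{1}{2\pi}\int_{-\pi}^{\pi}\log S_u(e^{j\omega})\, d\omega - \log\sigma_n^2 \right) = \frac{1}{4\pi}\int_{-\pi}^{\pi} \log\!\left( \frac{S_u(e^{j\omega})}{\sigma_n^2}\right) d\omega,
\end{align}
which is the claimed identity.

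The main obstacle I anticipate is the rigorous justification of the two limit statements — that the finite-horizon conditional variances of $u$ given its past, and of $n$, converge to the stationary quantities $\exp(\tfrac{1}{2\pi}\int \log S_u)$ and $\sigma_n^2$ respectively — under the mere AWSS assumption rather than genuine stationarity. One has to rule out pathologies where $S_u$ vanishes on a set of positive measure (infinite rate, handled by convention) and control the dependence of the early-time conditional entropies on the initial statistics, so that the Ces\`aro average washes them out. I expect this to be dispatched by a spectral-factorization argument: AWSS-ness gives an asymptotically time-invariant innovations representation $u = L\, n$ with $L \in \cal{U}_\infty$ (biproper, minimum-phase), from which $S_u = |L|^2 \sigma_n^2$ and the log-integral identity follow, and the convergence of the finite-horizon predictors is the standard Szeg\H{o} limit theorem applied to the Toeplitz forms of $S_u$. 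A secondary technical point is ensuring all differential entropies in \eqref{eq:plan-chain} are finite, which is where Assumption~\ref{assu:G} (positive-definite variance of $x_o$, hence of everything downstream) is used.
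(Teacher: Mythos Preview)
Your proposal is correct and follows essentially the same route as the paper: decompose each $I(u(i);y^i\mid u^{i-1})$ as $h(u(i)\mid u^{i-1})-h(n(i))$ using Gaussianity and the independence of the innovation $n(i)$ from $(y^i,u^{i-1})$, then pass to the limit. The only tactical difference is that the paper collapses the sums via the chain rule into $\tfrac{1}{k}\big(h(u^{k-1})-h(n^{k-1})\big)$ and invokes a cited entropy-rate lemma for Gaussian AWSS processes, whereas you argue termwise convergence via Kolmogorov--Szeg\H{o} and then Ces\`aro average; these are equivalent, and the AWSS-versus-stationary issue you flag is exactly what the paper dispatches by citing \cite[Lemma~4.3]{madado07} and a modification of \cite[Theorem~2.4]{porat94}.
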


\begin{proof}
We start by noting that, since $(u,y)$ are jointly Gaussian
AWSS processes, a simple modification of the proof of Theorem
2.4 in \cite[p.~20]{porat94} yields the conclusion that $n$ is
also Gaussian and AWSS.

To proceed, we note that
\begin{align}\label{eq:mutual-info-Gauss}
I(u(i); y^i | u^{i-1}) & \igual{a} h(u(i)|u^{i-1}) - h(u(i)| y^i, u^{i-1}) \nonumber \\
                       & \igual{b} h(u(i)|u^{i-1}) - h( n(i) + \hat{u}(i) | y^i, u^{i-1}) \nonumber \\
                       & \igual{c} h(u(i)|u^{i-1}) - h( n(i) | y^i, u^{i-1}) \nonumber \\
                       & \igual{d} h(u(i)|u^{i-1}) - h( n(i) ),
\end{align}
where $(a)$ follows from Property 1 in Appendix
\ref{ap:info-theory}, $(b)$ follows from the definition of
$\hat{u}$, $(c)$ follows from Property 2 in Appendix
\ref{ap:info-theory} and the fact that, by construction,
$\hat{u}(i)$ is a deterministic function of $(y^i, u^{i-1})$,
and $(d)$ follows from Property 3 in Appendix
\ref{ap:info-theory} and the fact that (again by construction),
$n(i)$ is independent of $(y^i, u^{i-1})$. Now,
\eqref{eq:mutual-info-Gauss} and the definition of directed
information rate yields
\begin{align}\label{eq:mutual-info-Gauss-listo}
I_{\infty}(y \to u)
    & =  \lim_{k\to \infty} \frac{1}{k}   \sum_{i=0}^{k-1} \Big\{ h(u(i)|u^{i-1})
                    - h(n(i)) \Big\} \nonumber \\
    & \igual{a} \lim_{k\to \infty} \frac{1}{k} \lp h(u^{k-1}) - h(n^{k-1}) \rp \nonumber \\
    & \igual{b} \frac{1}{4\pi} \int_{-\pi}^{\pi} \log{\lp 2\pi e S_u \ejw \rp }\, d\omega - \frac{1}{2} \log{\lp 2\pi e  \sigma_n^2 \rp},
\end{align}
where $(a)$ follows from Properties 3 and 4 in Appendix
\ref{ap:info-theory} and the fact that, by construction, $n(k)$
is independent of $n^{k-1}$, and $(b)$ follows from Lemma 4.3
in \cite{madado07} and the fact that both $u$ and $n$ are
Gaussian and AWSS. The result is now immediate from
\eqref{eq:mutual-info-Gauss-listo}.\fin
\end{proof}

Lemma~\ref{lema:directed-gaussiana} characterizes the directed
information rate between Gaussian AWSS processes in terms of
the spectrum of the process towards which the mutual
information is directed.   
Lemma~\ref{lema:directed-gaussiana}
generalizes Theorem 4.6 in~\cite{elia04}, where the author
calculates directed information rates between Gaussian
processes that are linked by an additive white Gaussian noise
channel.

\begin{figure}
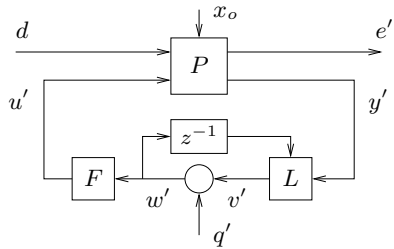

\centering
\input lazo_con_linear_coding.pstex_t
\caption{Auxiliary LTI system that arises when the encoder and decoder of Figure \ref{fig:ncs-general}
are replaced by proper LTI filters $F$ and $L$ and an additive white noise channel with (one-step delayed) feedback.}\label{fig:lazo-con-linear-coder}
\end{figure}

We are now ready to present the main result of this section. To
that end, we begin by noting that Theorem~\ref{teo:rates-directed} and Lemma~\ref{lema:cota-gauss}
readily imply that for any encoder and decoder satisfying
Assumption~\ref{assu:coding-schemes}, and rendering the
resulting NCS AWSS, $\mathscr{R}\geq I_{\infty}(y_G\to u_G)$,
where $(y_G,u_G)$ are such that $(x_o,d,u_G,y_G)$ are jointly
Gaussian with the same first- and second-order (cross-) moments
as $(x_o,d,u,y)$. We also note that that the adopted
performance measure is quadratic.  The above observations imply
that one can always match (or improve) the rate-performance
tradeoff of a given encoder-decoder pair by choosing, instead,
an encoder and a decoder which, besides rendering the NCS AWSS,
renders $(u, y)$ jointly Gaussian with $(x_o, d)$.   Since the
plant initial state and the disturbance are Gaussian, and the
plant is LTI, one possible way of achieving such pair of
signals $(u,y)$ is by using the LTI feedback architecture of
Figure \ref{fig:lazo-con-linear-coder}.  We formalize these
observations below.

Define the auxiliary LTI feedback scheme of Figure
\ref{fig:lazo-con-linear-coder}, where everything is as in
Figure \ref{fig:ncs-general} except for the fact that we have
replaced the link between the plant output $y$ and the plant
input $u$ by a set of proper LTI filters, $F$ and $L$, and an
additive noise channel with (one-step delayed) feedback and
noise $q'$ such that
\begin{align}\label{eq:linear-coding-explicito}
u' = F w', \quad w' = v' + q', \quad v' = L \diag{z^{-1}, 1} \matriz{w'\\ y'},
\end{align}
where $z^{-1}$ stands for the unit delay.  
In Fig.~\ref{fig:lazo-con-linear-coder}, we assume that the plant $P$,
the disturbance $d$ and the plant initial state $x_o$ satisfy
Assumption~\ref{assu:G}, that the initial states of $F,L$ and
of the delay are deterministic, and that $q'$ is zero mean
Gaussian white noise, independent of $(x_o,d)$, and having
constant variance $\sigma_{q'}^2$.

In Fig.~\ref{fig:lazo-con-linear-coder}, we have added
apostrophes (as in $e'$) to all symbols that refer to signals
that have a counterpart in the scheme of Fig.~\ref{fig:ncs-general} with possibly different statistics. To
streamline our presentation, we adopt the convention that,
whenever we refer to the auxiliary feedback system of Figure
\ref{fig:lazo-con-linear-coder}, it is to be understood that we
are implicitly working under the assumptions stated in the
above paragraph.

\begin{theorem}\label{teo:lower-bound}
Consider the NCS of Figure \ref{fig:ncs-general} and suppose
that Assumptions \ref{assu:G} and \ref{assu:coding-schemes}
hold.  If $D\in (D_{\inf}, \infty)$, then
\begin{align}\label{eq:lower-bound}
\mathscr{R}(D) \geq \phi_u'(D), \quad
    \phi_u'(D) \treq  \inf_{\sigma_{e'}^2 \leq D} \, \frac{1}{4\pi}\int_{-\pi}^{\pi}
            \log{ \lp \frac{S_{u'} \ejw }{\sigma_{q'}^2} \rp  } \,  d\omega ,
\end{align}
where the optimization defining $\phi_u'(D)$ is performed with
respect to all proper LTI filters $L$, and auxiliary noise
variances $\sigma_{q'}^2\in\R^+$, that render the LTI feedback
system of Figure \ref{fig:lazo-con-linear-coder} with $F=1$
internally stable and well-posed, and $S_{u'}$ and
$\sigma_{e'}^2$ denote the steady-state power spectral density
of $u'$ and the steady state variance $e'$ in Figure
\ref{fig:lazo-con-linear-coder}, respectively.
\end{theorem}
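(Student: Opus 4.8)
The plan is to combine the information-theoretic inequalities already at hand with an explicit realization, by the auxiliary scheme of Figure~\ref{fig:lazo-con-linear-coder} with $F=1$, of the Gaussian second-order statistics that those inequalities produce. Fix any encoder--decoder pair satisfying Assumption~\ref{assu:coding-schemes} that renders the NCS AWSS and achieves $\sigma_e^2\le D$, and denote its average data rate by $\mathscr{R}$. By Theorem~\ref{teo:rates-directed}, $\mathscr{R}\ge I_\infty(y\to u)$; since the NCS is AWSS the tuple $(x_o,d,y,u)$ is jointly second-order, so Lemma~\ref{lema:cota-gauss} gives $I_\infty(y\to u)\ge I_\infty(y_G\to u_G)$, with $(x_o,d,y_G,u_G)$ jointly Gaussian and sharing the first- and second-order moments of $(x_o,d,y,u)$. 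Then $(y_G,u_G)$ is jointly Gaussian and AWSS, the linear identity $y=P_{21}d+P_{22}u$ carries over to $y_G=P_{21}d+P_{22}u_G$, and $e_G\treq P_{11}d+P_{12}u_G$ satisfies $\sigma_{e_G}^2=\sigma_e^2\le D$. Applying Lemma~\ref{lema:directed-gaussiana} to $(y_G,u_G)$ yields $I_\infty(y_G\to u_G)=\tfrac{1}{4\pi}\int_{-\pi}^{\pi}\log\big(S_{u_G}(e^{j\omega})/\sigma_n^2\big)\,d\omega$, where $n$ is the Gaussian, white, AWSS innovation obtained from~\eqref{eq:def-n} with $(y,u)$ replaced by $(y_G,u_G)$; I will assume $\sigma_n^2>0$, the degenerate case being dispatched by a routine limiting argument.

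The crux is to reproduce this data with Figure~\ref{fig:lazo-con-linear-coder}. Being jointly Gaussian and AWSS, $u_G$ admits the innovations representation $u_G=L_w z^{-1}u_G+L_y y_G+n$ for some proper LTI filters $L_w,L_y$, since $\hat u_G(k)=\meanaux{u_G(k)\mid y_G^k,u_G^{k-1}}$ is a steady-state linear operation; I then take $F=1$, $L\treq[L_w\ L_y]$, and $\sigma_{q'}^2\treq\sigma_n^2$ in~\eqref{eq:linear-coding-explicito}. A key lemma to prove is that $n$ is uncorrelated with $(x_o,d)$: invertibility of the decoder and strict properness of $P_{22}$ allow one to write $u^k$, hence $u_G^k$ in the surrogate, as a deterministic function of $(\alpha^k,S^k)$, where $\alpha$ is the component of $y$ not driven by $u$ (a linear, hence Gaussian, image of $(x_o,d)$) and $S=(S_{\enc},S_{\dec})$ is independent of $(x_o,d)$; the part of any $(x_o,d)$-linear functional orthogonal to $\{y_G^k,u_G^{k-1}\}$ is orthogonal to $\alpha^k$, Gaussian, and independent of $S^k$, hence independent of $u_G^k$ and so orthogonal to $n(k)$, while causality of the encoder and decoder in $d$ handles the future samples. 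Consequently $q'$ may be taken independent of $(x_o,d)$, and $(u',y',e')$ are linear images of $(x_o,d,q')$ through exactly the transfer functions, and with exactly the noise variance, that map $(x_o,d,n)$ to $(u_G,y_G,e_G)$; in particular $S_{u'}=S_{u_G}=S_u$ and $\sigma_{e'}^2=\sigma_{e_G}^2=\sigma_e^2\le D$.

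Finally I would check that $(L,\sigma_{q'}^2)$ is admissible for the problem defining $\phi_u'(D)$. Well-posedness holds because $1-L_w z^{-1}-L_y P_{22}$, which governs the loop, is biproper by the strict properness of $P_{22}$ and hence invertible, being the inverse of the transfer function from $q'$ to $u'$. Internal stability of Figure~\ref{fig:lazo-con-linear-coder} with $F=1$ I would derive from the facts that the surrogate tuple $(x_G,d,n,u_G,y_G)$ jointly satisfies the plant state equations and the above $L$-recursion while being AWSS, and that $P$ has no unstable hidden modes (Assumption~\ref{assu:G}), choosing $L$ via the stabilizing spectral factor so that no instability is hidden. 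Putting the pieces together, $\mathscr{R}\ge I_\infty(y_G\to u_G)=\tfrac{1}{4\pi}\int_{-\pi}^{\pi}\log\big(S_{u'}(e^{j\omega})/\sigma_{q'}^2\big)\,d\omega\ge\phi_u'(D)$, and taking the infimum over admissible encoder--decoder pairs yields $\mathscr{R}(D)\ge\phi_u'(D)$. I expect the realization step --- establishing that $n$ is independent of $(x_o,d)$ and that the induced LTI loop is genuinely internally stable and well-posed, not merely input--output stable --- to be the main obstacle; the remainder is a direct chaining of results already proved.
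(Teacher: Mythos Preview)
Your proposal is correct and follows essentially the same route as the paper: reduce $\mathscr{R}$ to directed information (Theorem~\ref{teo:rates-directed}), pass to a Gaussian surrogate (Lemma~\ref{lema:cota-gauss}), express the rate spectrally (Lemma~\ref{lema:directed-gaussiana}), and realize the resulting second-order statistics by the scheme of Figure~\ref{fig:lazo-con-linear-coder} with $F=1$, $L$ given by the steady-state innovations filter, and $\sigma_{q'}^2=\sigma_n^2$. The only organizational difference is that the paper detours through the set $\cal{C}_{D,G}$ of coders rendering $(y,u)$ jointly Gaussian, whereas you work directly with the surrogate of an arbitrary coder; in doing so you explicitly flag the need for $n\perp(x_o,d)$ and for genuine internal stability of the LTI realization, two points the paper treats only implicitly (``by construction'' and ``otherwise the pair would not be in $\cal{C}_{D,G}$'').
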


\begin{proof}
Denote by $\cal{C}_D$ the set of all encoders $\enc$ and
decoders $\dec$ that satisfy Assumption
\ref{assu:coding-schemes}, render the NCS of Figure
\ref{fig:ncs-general} AWSS and guarantee that $\sigma_e^2 \leq
D$.   
Also, denote by $\cal{C}_{D,
G}$ the subset of $\cal{C}_D$ containing all encoders $\enc$
and decoders $\dec$ that in addition render $u$ and $y$ jointly
Gaussian. 
(Since $D>D_{\inf}$, $\cal{C}_{D,G}$ is non empty, and hence $\cal{C}_{D}$ is non empty as well;  see
Appendix \ref{ap:factibilidad}.)
Given Theorem \ref{teo:rates-directed}, the
definition of $\mathscr{R}(D)$, and the fact that $D>D_{\inf}$
guarantees that the problem of finding $\mathscr{R}(D)$ is
feasible, it follows that
\begin{align}\label{eq:primera-cota-optimo-RD}
\mathscr{R}(D) & \geq \inf_{(\enc,\dec) \in \; \cal{C}_D} I_{\infty}(y \to u) \nonumber \\
               & \mayori{a} \inf_{(\enc,\dec) \in \; \cal{C}_{D, G}} I_{\infty}(y \to u) \nonumber \\
               & \igual{b}  \inf_{(\enc,\dec) \in \; \cal{C}_{D, G}} \;
                        \frac{1}{4\pi} \int_{-\pi}^{\pi} \log{ \lp \frac{S_u \ejw }{\sigma_n^2} \rp } \, d\omega,
\end{align}
where $(a)$ follows from Lemma \ref{lema:cota-gauss}, and $(b)$
follows from Lemma \ref{lema:directed-gaussiana}.

To proceed, pick any $(\enc,\dec) \in \; \cal{C}_{D, G}$ and
recall the definition of the noise source $n$ in
\eqref{eq:def-n}.   By definition of $\cal{C}_{D,G}$,
\eqref{eq:def-n} is equivalent to the existence of a sequence
of linear mappings $L_k$, $k\in\N_0$, such that
\begin{align}\label{eq:mapa-L}
u(k) = L_k(y^k, u^{k-1}) + n(k),
\end{align}
where $n(k)$ is independent of $(y^k, u^{k-1})$.  Since, for
$(\enc,\dec) \in \; \cal{C}_{D, G}$, $(y,u)$ are jointly AWSS,
it follows from a straightforward modification of the material
in \cite[p.~19]{porat94} that $L_k$ converges to an LTI mapping
as $k\to\infty$.  Such limiting mapping renders the resulting
NCS internally stable and well-posed (otherwise the underlying
encoder and decoder would not be in $\cal{C}_{D, G}$), and
defines the steady-state spectrum $S_u$ of $u$ and the
steady-state variances $\sigma_n^2$ and $\sigma_e^2$ of both
$n$ and $e$. (Here, we use the fact that $n$ is also AWSS; see
proof of Lemma \ref{lema:directed-gaussiana}.)

Now, consider the auxiliary LTI feedback system of Figure
\ref{fig:lazo-con-linear-coder} described before.   Assume that
$F=1$, that $L$ reproduces the steady-state behavior of $L_k$
in \eqref{eq:mapa-L}, and that $q'$ has a variance equal to
$\sigma_n^2$ (see previous paragraph).  With the above choices
for $F$, $L$ and $q'$, and given the properties of the limiting
map $L_k$ summarized in the above paragraph, it follows that
the feedback system of Figure \ref{fig:lazo-con-linear-coder}
is internally stable and well-posed and, in particular, that
the plant input $u'$ admits a steady-state power spectral
density $S_{u'}$ that, by construction, equals $S_u$ in the
previous paragraph. Similarly, the error signal $e'$ in Figure
\ref{fig:lazo-con-linear-coder} admits a steady-state variance
$\sigma_{e'}^2$ that equals $\sigma_e^2$.  By mirroring the
derivations leading to \eqref{eq:mutual-info-Gauss-listo} it
thus follows that
\begin{align}\label{eq:mutual-tilde-igual-mutual}
I_{\infty}(y'\to u') & =  \frac{1}{4\pi} \int_{-\pi}^{\pi} \log{ \lp \frac{S_{u'} \ejw }{\sigma_{q'}^2} \rp } \, d\omega
        = \frac{1}{4\pi} \int_{-\pi}^{\pi} \log{ \lp \frac{S_{u} \ejw
}{\sigma_{n}^2} \rp } \, d\omega.
\end{align}
We thus conclude that, for any encoder and decoder in
$\cal{C}_{D, G}$, there exist a proper LTI filter $L$ and a
Gaussian white noise source $q'$ such that, when $F=1$, the
mutual information rate $I_{\infty}( y'\to u')$ in Figure
\ref{fig:lazo-con-linear-coder} equals $I_{\infty}(y\to u)$ in
Figure \ref{fig:ncs-general} while achieving
$\sigma_{e'}^2=\sigma_e^2$.  Our claim is now immediate from
\eqref{eq:primera-cota-optimo-RD},
\eqref{eq:mutual-tilde-igual-mutual} and the properties of
$L$.\fin
\end{proof}

Theorem \ref{teo:lower-bound} states that a lower bound on the
minimal average data rate that guarantees a given performance
level, can be obtained by solving an optimization problem which
is stated for the auxiliary LTI feedback system of Figure
\ref{fig:lazo-con-linear-coder}, where communication takes
place over an additive white Gaussian noise channel with
feedback. 

We finish this section by deriving a simpler lower bound on
$\mathscr{R}(D)$.   To that end, we will first state an
auxiliary result.

\begin{lemma}\label{lema:w-blanco}
Consider the LTI feedback system of Figure
\ref{fig:lazo-con-linear-coder}.  Fix $\sigma_{q'}^2\in\R^+$
and define (whenever the involved quantities exist)
\begin{align}\label{eq:def-phi-w}
\phi_w'(F,L, \sigma_{q'}^2) \treq \frac{1}{4\pi}\int_{-\pi}^{\pi} \log \lp
\frac{S_{w'} \ejw }{\sigma_{q'}^2} \rp \, d\omega,
\end{align}
where $S_{w'}$ is the steady-state power spectral density of
$w'$.  If the pair $(F,L) = (F^{(0)}, L^{(0)})$ renders the
feedback system of Figure \ref{fig:lazo-con-linear-coder}
internally stable and well-posed, then there exist a second
pair of filters, namely $(F, L) = (F^{(1)}, L^{(1)})$, with
$F^{(1)}$ biproper, that also defines an internally stable and
well-posed feedback loop, leaves the steady-state power
spectral density of $e'$ unaltered, and is such
that\footnote{The notation $\left. X \right|_{Z=Z_1}$ is used
to denote the quantity $X$ when $Z=Z_1$.}
\begin{align}\label{eq:clave-w-blanco}
\phi_w'(F^{(1)}, L^{(1)}, \sigma_{q'}^2) = \phi_w'(F^{(0)}, L^{(0)}, \sigma_{q'}^2) = \left. \frac{1}{2}\log{\lp 1 +
\frac{\sigma_{v'}^2}{\sigma_{q'}^2} \rp} \right|_{(F, L)=(F^{(1)}, L^{(1)})} - \eta
\end{align}
for any (arbitrarily small) $\eta>0$.
\end{lemma}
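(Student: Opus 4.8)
The plan is to reduce the statement to a spectral-factorization argument applied to the transfer function from $q'$ to $w'$. First I would write down, for a given internally stable and well-posed pair $(F^{(0)},L^{(0)})$, the closed-loop map $T_{q'\to w'}$ in Figure~\ref{fig:lazo-con-linear-coder}; using \eqref{eq:linear-coding-explicito} and the fact that $d$ and $q'$ are independent, the steady-state spectrum of $w'$ decomposes as $S_{w'}=\sigma_{q'}^2\,|T_{q'\to w'}\ejws|^2+S_{w'}^{(d)}\ejws$, where $S_{w'}^{(d)}$ collects the contribution of the disturbance. The key observation is that $w'=v'+q'$, so $T_{q'\to w'}$ is \emph{biproper} with $\lim_{z\to\infty}T_{q'\to w'}(z)=1$ (the direct feed-through of $q'$ into $w'$ is unity, since the feedback path $v'=L\diag{z^{-1},1}[w';\,y']$ has a one-step delay on the $w'$ branch). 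Then $\phi_w'$ is exactly the Bode-type integral $\tfrac{1}{4\pi}\int_{-\pi}^{\pi}\log\!\lp S_{w'}\ejws/\sigma_{q'}^2\rp d\omega$, and I would aim to show this integral is lower-bounded by $\tfrac12\log\!\lp 1+\sigma_{v'}^2/\sigma_{q'}^2\rp$ for the given loop, with equality approachable (up to $\eta$) after the modification.

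Second, I would perform the actual construction of $(F^{(1)},L^{(1)})$. The idea is to ``whiten'' $w'$: let $\Omega$ be the canonical (stable, minimum-phase, biproper, monic at $z=\infty$) spectral factor of $S_{w'}/\sigma_{q'}^2$, so that $S_{w'}\ejws/\sigma_{q'}^2=|\Omega\ejws|^2$ and, by the discrete-time Kolmogorov--Szeg\H{o} formula, $\tfrac{1}{4\pi}\int_{-\pi}^{\pi}\log\!\lp S_{w'}\ejws/\sigma_{q'}^2\rp d\omega=\tfrac12\log\!\lp\Omega(\infty)^2\,\sigma_{q'}^2/\sigma_{q'}^2\rp$ up to the contribution of the zeros --- here I would instead use that the integral equals $\tfrac12\log$ of the variance of the one-step-ahead prediction residual of $w'$, which I want to match to $\sigma_{q'}^2+\sigma_{v'^{(1)}}^2$ by an appropriate absorption of a biproper factor into $F$ and a compensating factor into $L$. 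Concretely, I set $F^{(1)}=\Omega^{-1}$ (biproper, stable, stably invertible, hence in $\cal{U}_\infty$) and choose $L^{(1)}$ so that the loop gain, and therefore all closed-loop maps to $e'$, are preserved: since $u'=F w'$ and $v'=L\diag{z^{-1},1}[w';\,y']$, replacing $(F,L)$ by $(\Omega^{-1}F^{(0)}\!\cdot\!(\cdots),\,\Omega L^{(0)}(\cdots))$ leaves $u'$, and hence $e'$, unchanged while rescaling $w'$ by $\Omega^{-1}$; this makes the new $w'$ white with variance $\sigma_{q'}^2+\sigma_{v'^{(1)}}^2$, giving \eqref{eq:clave-w-blanco} with $\eta=0$. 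The $\eta>0$ slack is there to absorb the approximation needed because the exact canonical spectral factor of $S_{w'}$ may fail to be rational of the allowed McMillan degree, or may have zeros on the unit circle; one takes a rational biproper approximation $\Omega_\eta$ with $\|\log|\Omega_\eta|-\log|\Omega|\|$ small in $L^1$, which is possible since $\log S_{w'}$ is integrable (internal stability and well-posedness guarantee $S_{w'}\in L^1$ and $\log S_{w'}\in L^1$, i.e. the Paley--Wiener condition holds).

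Third, I would verify the two ``housekeeping'' claims: that $(F^{(1)},L^{(1)})$ is internally stable and well-posed, and that the steady-state spectrum of $e'$ is unaltered. Internal stability is immediate because $F^{(1)}=\Omega_\eta^{-1}\in\cal{U}_\infty$ and we pre-compensate $L$ by $\Omega_\eta\in\cal{U}_\infty$, so every closed-loop transfer function in the new loop is obtained from one in the old loop by multiplication by an element of $\cal{U}_\infty$ on the appropriate side; well-posedness (biproperness of the relevant return-difference) is preserved because $F^{(1)}$ is biproper by construction and the one-step delay on the $w'$ branch of $L$ keeps the algebraic loop solvable. Invariance of $S_{e'}$ follows because $e'$ depends on the loop only through the map $d\mapsto u'$ (and the plant), and that map is unchanged by the simultaneous $\Omega_\eta^{-1}$/$\Omega_\eta$ substitution.

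The main obstacle I anticipate is the rational-approximation step: showing that a biproper \emph{rational} $F^{(1)}$ (so that the modified loop is still an LTI system of the admissible kind, not an infinite-dimensional one) can be chosen to bring $\phi_w'$ within $\eta$ of $\tfrac12\log(1+\sigma_{v'}^2/\sigma_{q'}^2)$ while \emph{exactly} preserving $S_{e'}$. Exact preservation of $S_{e'}$ together with only-approximate whitening of $w'$ must be reconciled; the resolution is that $\sigma_{v'}^2$ on the right-hand side of \eqref{eq:clave-w-blanco} is evaluated \emph{at} $(F^{(1)},L^{(1)})$, so the target value itself moves with the approximation, and one only needs the residual gap between $\phi_w'$ and $\tfrac12\log(1+\sigma_{v'^{(1)}}^2/\sigma_{q'}^2)$ to be squeezed below $\eta$, which the $L^1$-closeness of $\log|\Omega_\eta|$ to $\log|\Omega|$ delivers via continuity of the integral and of the variance functional.
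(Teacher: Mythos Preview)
Your core strategy---whiten $w'$ via a spectral factor of $S_{w'}$ and absorb the compensating factor into $(F,L)$ so that $u'$, and hence $S_{e'}$, is unchanged---is exactly the paper's approach. Three points in the execution need correcting, though.

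First, the inequality direction: since the one-step delay makes $v'(k)\perp q'(k)$, one has $\sigma_{w'}^2=\sigma_{v'}^2+\sigma_{q'}^2$ and Jensen gives $\phi_w'\le\tfrac12\log(1+\sigma_{v'}^2/\sigma_{q'}^2)$, not $\ge$; the lemma asserts that this gap can be driven down to $\eta$. Second, you never argue the first equality in \eqref{eq:clave-w-blanco}, namely $\phi_w'(F^{(1)},L^{(1)},\sigma_{q'}^2)=\phi_w'(F^{(0)},L^{(0)},\sigma_{q'}^2)$. In the paper this holds because the transformation is arranged so that $S_{w'^{(1)}}=|X|^2 S_{w'^{(0)}}$ with $X\in\mathcal U_\infty$ and $X(\infty)=1$, whence $\tfrac{1}{2\pi}\int\log|X|\,d\omega=0$; your sketch does not isolate such a factor. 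Third, and most importantly, ``$(\Omega^{-1}F^{(0)}\cdot(\cdots),\ \Omega L^{(0)}(\cdots))$'' is too vague to verify internal stability and well-posedness: $L$ has two components $L=[L_w\ \ L_y]$ that must be transformed differently. The paper sets, for a free $X\in\mathcal U_\infty$ with $X(\infty)=1$,
\[
F^{(1)}=z^{n_0}F^{(0)}X^{-1},\qquad L_y^{(1)}=z^{-n_0}L_y^{(0)},\qquad L_w^{(1)}=z\!\left(1-(1-z^{-1}L_w^{(0)})X^{-1}\right),
\]
where $n_0$ is the relative degree of $F^{(0)}$; this makes $F^{(1)}$ biproper and yields $T^{(1)}=\operatorname{diag}(z^{n_0}I,z^{n_0}I,X,z^{n_0}I)\,T^{(0)}\,\operatorname{diag}(I,z^{-n_0}I,z^{-n_0}I,z^{-n_0}I)$ for the full closed-loop transfer matrix, from which internal stability, well-posedness, invariance of $S_{e'}$, and $S_{w'^{(1)}}=|X|^2S_{w'^{(0)}}$ all follow simultaneously. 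The paper then chooses $X$ as (essentially) the inverse spectral factor of $S_{w'^{(0)}}$ with unit-circle zeros pushed to radius $\epsilon<1$; your $L^1$ rational-approximation idea for handling unit-circle zeros is a legitimate alternative, but you should note that the approximation is applied only to $X$ and leaves $S_{e'}$ \emph{exactly} fixed, since $S_{e'}$ depends on the loop only through products like $F^{(1)}L_y^{(1)}=F^{(0)}L_y^{(0)}$ in which $X$ cancels.
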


\begin{proof}
Consider Figure \ref{fig:lazo-con-linear-coder} and the
partition for $P$ in \eqref{eq:P-partida}.  Introduce proper
transfer functions $L_y$ and $L_w$ such that $L = [ \, L_w \;
L_y \,]$ (see \eqref{eq:linear-coding-explicito}). A standard
argument \cite{franci87} shows that the feedback system of
Figure \ref{fig:lazo-con-linear-coder} is internally stable and
well-posed if and only if the transfer function $T$ between
$[\, q' \; d \; n_1 \; n_2 ]^T$ and $[\, e' \; y' \; w' \; u'
]^T$ in Figure \ref{fig:lazo-con-linear-coder-para-estabilidad}
is stable and proper. It is straightforward to see that
\begin{align}
T = \matriz{ P_{12}FS &  P_{11} + P_{12}FL_ySP_{21} & P_{12} \lp 1 - \frac{L_w}{z} \rp S  & P_{12}FL_yS \\
P_{22}FS & \lp 1 - \frac{L_w}{z} \rp P_{21} S & \lp 1 - \frac{L_w}{z} \rp P_{22}S  & P_{22}FL_yS \\
S & L_yP_{21}S & L_yP_{22}S & L_yS \\
FS & FL_ySP_{21} & \lp 1 - \frac{L_w}{z} \rp S & FL_y S },
\end{align}
where
\begin{align}\label{eq:def-S}
S\treq \lp 1 - L_wz^{-1} - P_{22}F L_y \rp^{-1}.
\end{align}
We will write $T^{(i)}$ to refer to the matrix $T$ that arises
when $(F,L)=(F^{(i)}, L^{(i)})$, $i\in\{0,1\}$.  Similarly,
$L_y^{(i)}$ and $L_w^{(i)}$ refer to the components of $L$,
when $L=L^{(i)}$.

\begin{figure}
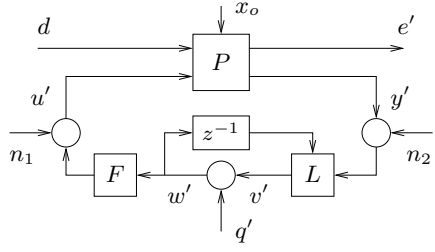

\centering
\input lazo_con_linear_coding_estabilidad.pstex_t
\caption{Auxiliary feedback system for stability analysis.}
\label{fig:lazo-con-linear-coder-para-estabilidad}
\end{figure}

Set
\begin{align}\label{eq:alternativos}
F^{(1)} = z^{n_0} F^{(0)} X^{-1}, \quad L_y^{(1)}= z^{-n_0} L_y^{(0)},
    \quad L_w^{(1)} = z \lp 1 - \lp 1-\frac{L_w^{(0)}}{z}\rp X^{-1} \rp,
\end{align}
where $n_0$ is the relative degree of $F^{(0)}$,
$X\in\cal{U}_{\infty}$ and $X(\infty)=1$. Given
\eqref{eq:alternativos} and the fact that $X\in
\cal{U}_{\infty}$, it follows that $F^{(1)}$ is biproper and
that
\begin{align}
T^{(1)}=\diag{z^{n_0}I,z^{n_0}I,X,z^{n_0}I}T^{(0)}\diag{I, z^{-n_0}I, z^{-n_0}I, z^{-n_0}I}.
\end{align}
The definition of $n_0$ and $X$ guarantees that the pair
$(F^{(1)}, L^{(1)})$ renders the feedback system of Figure
\ref{fig:lazo-con-linear-coder} internally stable and
well-posed if and only if $(F^{(0)}, L^{(0)})$ does so.  It
also immediately follows that $(F^{(1)}, L^{(1)})$ defines the
same stationary spectral density for $e'$ than $(F^{(0)},
L^{(0)})$.

To complete the proof, we now propose specific choice for $X$.
Denote by $w'^{(i)}$ the signal $w'$ that arises when $(F,L) =
(F^{(i)}, L^{(i)})$. Write
$S_{w'^{(0)}}=\abs{\Omega_{w'^{(0)}}}^2$, where
$\Omega_{w'^{(0)}}$ is stable, biproper, and has all its zeros
in $\{z\in\C: \abs{z}\leq 1\}$. Denote by $c_1,\dots,c_{n_c}$
the zeros of $\Omega_{w'^{(0)}}$ that lie on the unit circle.
Define, for $\epsilon \in (0,1)$,
\begin{align}
\tilde{\Omega}_{w'^{(0)}} \triangleq \Omega_{w'^{(0)}}
\prod_{i=1}^{n_c} z(z-c_i)^{-1}, \quad X_{\epsilon}\triangleq \lp
\tilde{\Omega}_{w'^{(0)}} \rp^{-1}\tilde{\Omega}_{w'^{(0)}}(\infty)
\prod_{i=1}^{n_c} z(z-\epsilon
c_i)^{-1}.
\end{align}
By construction, $X_{\epsilon}(\infty)=1$ and
$X_{\epsilon}\in\cal{U}_{\infty}$ for every $\epsilon\in(0,1)$.
It now follows, by proceeding as in the proof of Theorem 5.2 in
\cite{sideos11-TAC}, that there exists $\epsilon\in(0,1)$ such
that setting $X=X_{\epsilon}$ in~\eqref{eq:alternativos}
guarantees that $(F^{(1)}, L^{(1)})$ is such that
\eqref{eq:clave-w-blanco} holds for any $\eta>0$. \fin
\end{proof}

\begin{corollary} \label{coro:lower-bound-snr}
Consider the NCS of Figure \ref{fig:ncs-general} and suppose
that Assumptions \ref{assu:G} and \ref{assu:coding-schemes}
hold.   If $D\in (D_{\inf}, \infty)$, then
\begin{align}\label{eq:lower-bound-snr}
\mathscr{R}(D) \geq \frac{1}{2} \log{ \lp 1 + \gamma'(D) \rp }, \quad
\gamma'(D) \treq \inf_{\sigma_{e'}^2 \leq D} \; \gamma', \quad \gamma' \treq \frac{\sigma_{v'}^2}{\sigma_{q'}^2},
\end{align}
where the optimization defining $\gamma'(D)$ is performed with
respect to all proper LTI filters $F$ and $L$, and auxiliary
noise variances $\sigma_{q'}^2\in\R^+$, that render the LTI
feedback system of Figure \ref{fig:lazo-con-linear-coder}
internally stable and well-posed, and $\sigma_{v'}^2$ and
$\sigma_{e'}^2$ denote the steady-state variances of $v'$ and
$e'$ in Figure \ref{fig:lazo-con-linear-coder}, respectively.
\end{corollary}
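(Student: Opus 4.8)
The plan is to derive the claimed bound from Theorem~\ref{teo:lower-bound} together with the filter re-parametrization of Lemma~\ref{lema:w-blanco}. Theorem~\ref{teo:lower-bound} already yields $\mathscr{R}(D)\geq \phi_u'(D)$, so it suffices to prove $\phi_u'(D)\geq \frac{1}{2}\log\lp 1+\gamma'(D)\rp$. To connect the integral in $\phi_u'(D)$ with the functional $\phi_w'$ of~\eqref{eq:def-phi-w}, I would first note that the optimization defining $\phi_u'(D)$ fixes $F=1$, so that $u'=w'$ in~\eqref{eq:linear-coding-explicito} and hence $S_{u'}=S_{w'}$; consequently
\begin{align*}
\phi_u'(D)=\inf\;\phi_w'(1,L,\sigma_{q'}^2),
\end{align*}
the infimum being taken over all proper LTI filters $L$ and all $\sigma_{q'}^2\in\R^+$ that render the loop of Figure~\ref{fig:lazo-con-linear-coder} with $F=1$ internally stable and well-posed and achieve $\sigma_{e'}^2\leq D$.

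Next, I would fix an arbitrary $\eta>0$ and pick any such feasible triple $(1,L,\sigma_{q'}^2)$. Applying Lemma~\ref{lema:w-blanco} with $(F^{(0)},L^{(0)})=(1,L)$ produces a pair $(F^{(1)},L^{(1)})$, with $F^{(1)}$ biproper and the resulting loop internally stable and well-posed, that leaves the steady-state spectral density of $e'$ (hence $\sigma_{e'}^2\leq D$) unchanged and satisfies
\begin{align*}
\phi_w'(1,L,\sigma_{q'}^2)=\phi_w'(F^{(1)},L^{(1)},\sigma_{q'}^2)=\left.\frac{1}{2}\log\lp 1+\frac{\sigma_{v'}^2}{\sigma_{q'}^2}\rp\right|_{(F,L)=(F^{(1)},L^{(1)})}-\eta .
\end{align*}
Since $(F^{(1)},L^{(1)},\sigma_{q'}^2)$ uses proper LTI filters, renders the loop internally stable and well-posed, and meets $\sigma_{e'}^2\leq D$, it is admissible for the optimization defining $\gamma'(D)$ in~\eqref{eq:lower-bound-snr}, whence $\left.\gamma'\right|_{(F,L)=(F^{(1)},L^{(1)})}\geq \gamma'(D)$ and therefore $\phi_w'(1,L,\sigma_{q'}^2)\geq \frac{1}{2}\log\lp 1+\gamma'(D)\rp-\eta$. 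Taking the infimum over all feasible $(1,L,\sigma_{q'}^2)$ gives $\phi_u'(D)\geq \frac{1}{2}\log\lp 1+\gamma'(D)\rp-\eta$; letting $\eta\downarrow 0$ and combining with $\mathscr{R}(D)\geq\phi_u'(D)$ proves~\eqref{eq:lower-bound-snr}.

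I expect the only delicate point to be the direction of the bound. A direct attack via Jensen's inequality on the concave logarithm gives only $\phi_w'(1,L,\sigma_{q'}^2)\leq \frac{1}{2}\log\lp \sigma_{w'}^2/\sigma_{q'}^2\rp=\frac{1}{2}\log\lp 1+\gamma'\rp$ (using that $v'(k)$, being a delayed functional of $w'$ and $y'$, is uncorrelated with $q'(k)$, so $\sigma_{w'}^2=\sigma_{v'}^2+\sigma_{q'}^2$), which points the wrong way; the substance of the reduction is exactly that Lemma~\ref{lema:w-blanco} lets one attain this bound to within $\eta$ by a spectrum-whitening, biproper choice of $F^{(1)}$ while holding $\sigma_{e'}^2$ fixed. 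That spectral-factorization argument is already carried out in the proof of Lemma~\ref{lema:w-blanco}, so in the present corollary it merely needs to be invoked; the remaining verifications — that the perturbed triples stay admissible for the $\gamma'(D)$ problem and that the $\eta$-limit is legitimate — are routine.
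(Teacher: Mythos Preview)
Your proposal is correct and follows essentially the same route as the paper: invoke Theorem~\ref{teo:lower-bound}, identify $\phi_u'(D)$ with $\inf\phi_w'(1,L,\sigma_{q'}^2)$ via $F=1$, and then apply Lemma~\ref{lema:w-blanco} to pass from the spectral integral to $\tfrac{1}{2}\log(1+\gamma')$ at an admissible triple for the $\gamma'(D)$ problem. The only cosmetic difference is that the paper picks an $\epsilon$-optimal $(1,L_\epsilon,\sigma_{\epsilon'}^2)$ and lets $\epsilon,\eta\to 0$, whereas you bound every feasible $(1,L,\sigma_{q'}^2)$ uniformly and then take the infimum before letting $\eta\to 0$; these are equivalent standard devices.
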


\begin{proof}
Consider the LTI feedback system of Figure
\ref{fig:lazo-con-linear-coder} and recall the definition of
both $\phi_u'(D)$ and $\phi_w'$ in \eqref{eq:lower-bound} and
\eqref{eq:def-phi-w}. Since $D>D_{\inf}$, the problem of
finding $\phi_u'(D)$ is feasible (see Appendix~\ref{ap:factibilidad}).  Thus, for any $\epsilon>0$, there
exist a proper LTI filter $L_{\epsilon}$, and
$\sigma_{\epsilon'}^2\in\R^+$, such that $\sigma_{e'}^2\leq D$
and
\begin{align}
\phi_u'(D) + \epsilon & \geq \phi_w'(1,L_{\epsilon}, \sigma_{\epsilon'}^2),
\end{align}
where we have used the fact that $u'=w'$ whenever $F=1$.  On
the other hand, Lemma \ref{lema:w-blanco} guarantees that there
exists a pair of proper filters $(\bar{F}_{\epsilon},
\bar{L}_{\epsilon})$, with $\bar{F}_{\epsilon}$ biproper, such
that the auxiliary feedback system of Figure
\ref{fig:lazo-con-linear-coder} is internally stable and
well-posed,
\begin{align}
\left. \sigma_{e'}^2 \right|_{(F, L, \sigma_q^2)=(1, L_{\epsilon}, \sigma_{\epsilon'}^2 )}
    = \left. \sigma_{e'}^2 \right|_{(F, L, \sigma_q^2)
        =(\bar{F}_{\epsilon}, \bar{L}_{\epsilon}, \sigma_{\epsilon'}^2 ) } \leq D
\end{align}
and, in addition, such that for any $\eta>0$
\begin{align}\label{eq:clave-lower-bound-snr}
\phi_u'(D) + \epsilon + \eta =  \left. \frac{1}{2} \log{ \lp 1
                            + \frac{\sigma_{v'}^2}{\sigma_q^2} \rp } \right|_{(F, L, \sigma_q^2)
                                = ( \bar{F}_{\epsilon}, \bar{L}_{\epsilon}, \sigma_{\epsilon'}^2 ) }
                                     \geq \frac{1}{2}\log{\lp 1 + \gamma'(D) \rp},
\end{align}
where the inequality follows from the definition of
$\gamma'(D)$. Since \eqref{eq:clave-lower-bound-snr} holds for
any $\epsilon, \eta>0$, our claim is now immediate from Theorem
\ref{teo:lower-bound}. \fin
\end{proof}

Corollary \ref{coro:lower-bound-snr} shows that a lower bound
on $\mathscr{R}(D)$ can be obtained by first characterizing
$\gamma'(D)$, i.e., by first characterizing, for the auxiliary
LTI feedback system of Figure \ref{fig:lazo-con-linear-coder},
the minimal steady-state SNR $\gamma' =
\sigma_{v'}^2/\sigma_{q'}^2$ that guarantees that the
steady-state variance of the error signal $e'$ is upper bounded
by $D$.   Section \ref{sec:computing-the-bounds} discusses how
to obtain a numerical approximation to $\gamma'(D)$.

\section{An Upper Bound on $\mathscr{R}(D)$}\label{sec:upper-linear-coding}
This section shows that it is indeed possible to achieve any
distortion level $D\in(D_{\inf}, \infty)$ while incurring an
average data rate that exceeds the lower bound on
$\mathscr{R}(D)$ in Corollary \ref{coro:lower-bound-snr} by
less than (approximately) $1.254$ bits per sample.  

\begin{definition}\label{def:linear-source-coder}
The source coding scheme described by \eqref{eq:enc} and
\eqref{eq:dec} is said to be \emph{linear} if and only if, when
used around an error-free zero-delay digital channel, is such
that its input $y$ and output $u$ are related via
\begin{align}\label{eq:def-linear-coding}
u = F w, \quad w = q + v, \quad v = L \diag{ z^{-1}, 1} \matriz{w \\ y},
\end{align}
where $v$ and $w$ are scalar-valued auxiliary signals, $q$ is
an independent second-order zero-mean i.i.d.~sequence, and both
$F$ and $L$ are the transfer functions of proper LTI systems
that, together with the unit delay $z^{-1}$, have deterministic
initial states.\fin
\end{definition}

\begin{remark}
In Definition \ref{def:linear-source-coder}, the requirement of
$q$ being independent (without reference to other random
variables or processes) is to be understood as requiring $q$ to
be independent of all exogenous processes and initial states in
the (feedback) system in which the source coding scheme is
embedded. In particular, when an independent source coding
scheme is used in the NCS of Figure \ref{fig:ncs-general}, $q$
is to be assumed independent of $(x_o,d)$. \fin
\end{remark}

The class of linear source coding schemes is motivated by the
results of Section \ref{sec:lower} and  generalizes the class
of independent source coding schemes introduced in
\cite{sideos11-TAC}.\footnote{In the latter class, $u = y +
\Omega q$ with $\Omega \in\cal{U}_{\infty}$.}  We note that
independent source coding schemes do not necessarily satisfy
Assumption \ref{assu:coding-schemes}.

Linear source coding schemes are defined in terms of their
input-output relationship with no regard as to how the channel
input $y_c$ is related to the source coding scheme input $y$.
A simple way of making that relationship explicit is by using
an entropy-coded dithered quantizer (ECDQ; \cite{sideos11-TAC,
zamfed92}). When using such a device, $v$ and $w$ in
\eqref{eq:def-linear-coding}, and the channel input $y_c$ and
output $u_c$, are related via
\begin{subequations}\label{eq:ecdq}
\begin{align}
w(k) &= \hat{s}(k) - d_h(k), & \hat{s}(k) & = \mathscr{H}_k^{-1}(u_c(k), d_h(k)), \\
y_c(k) &= \mathscr{H}_k(s(k), d_h(k)), & \quad s(k) & = \cal{Q}(v(k)+d_h(k)),
\end{align}
\end{subequations}
where $d_h$ is a dither signal available at both the encoder
and decoder sides, $\cal{Q}:\R\to \{i\Delta; i\in\Z\}$ denotes
a uniform quantizer with step size $\Delta\in\R^+$,
$\mathscr{H}_k$ is a mapping describing an entropy-coder (i.e.,
a loss-less encoder \cite[Ch.5]{covtho06}) whose output symbol
is chosen according to the conditional distribution of $s(k)$,
given $d_h(k)$, and $\mathscr{H}^{-1}_k$ is a mapping
describing the entropy-decoder that is complementary to the
entropy-coder at the encoder side.

\begin{lemma}[Theorem 5.3 in \cite{sideos11-TAC}]\label{lema:ruido-en-ecdq}
Consider the setup of Figure~\ref{fig:ecdq-feedback}, where the
ECDQ is as in \eqref{eq:ecdq} and has a finite quantization
step $\Delta$.  Assume that $\bar{P}$ is a proper real rational
transfer function, that the open-loop transfer function from
$w$ to $v$ is single-input single-output and strictly proper,
and that the signal $\bar{d}$ is a white noise sequence jointly
second order with the initial state $\bar{x}_o$ of $\bar{P}$.
If the dither $d_h$ is i.i.d., independent of $(\bar{x}_o,
\bar{d})$ and uniformly distributed on $(-\Delta/2, \Delta/2)$,
then $w - v$ is i.i.d., independent of $(\bar{x}_o,\bar{d})$
and uniformly distributed in $(-\Delta/2, \Delta/2)$. \fin
\end{lemma}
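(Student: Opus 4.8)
The plan is to reduce this statement to the classical result about subtractively dithered quantization, namely that in a standalone ECDQ the quantization error $w-v = \cal{Q}(v+d_h)-d_h-v$ is, conditioned on $v$, uniformly distributed on $(-\Delta/2,\Delta/2)$ and hence independent of $v$; this is Schuchman's condition and is standard (e.g.\ Zamir--Feder \cite{zamfed92}). The delicacy, exactly as in Theorem 5.3 of \cite{sideos11-TAC}, is that here the ECDQ is placed inside a feedback loop, so $v(k)$ depends on past values of $w$, which in turn depend on past quantization errors; one cannot simply invoke the open-loop result pointwise without first disentangling this dependence.

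First I would set up an induction on $k$. The induction hypothesis is that the vector $\big((w-v)^{k-1}, \bar x_o, \bar d\big)$ has the property that $(w-v)^{k-1}$ is i.i.d.\ uniform on $(-\Delta/2,\Delta/2)$ and independent of $(\bar x_o,\bar d)$, and additionally that $d_h^{k}$ is independent of everything generated by the loop up to that point (which follows from the assumed independence of the dither from $(\bar x_o,\bar d)$ and the fact that $d_h$ is i.i.d.). The key structural observation is that, because $\bar P$, $F$ and $L$ are LTI with deterministic initial states and the loop around $w\to v$ is well-posed and strictly proper (so $v(k)$ does not depend on $w(k)$, only on $w^{k-1}$), the signal $v(k)$ is a deterministic function of $\bar x_o$, $\bar d^{\,k}$, the past errors $(w-v)^{k-1}$, and the past dither $d_h^{k-1}$. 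Crucially $v(k)$ does \emph{not} depend on $d_h(k)$.

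The heart of the argument is then the one-step extension: conditioned on the ``past'' $\mathcal{F}_k \treq \big( v(k), (w-v)^{k-1}, \bar x_o, \bar d, d_h^{k-1}\big)$, the new dither sample $d_h(k)$ is still uniform on $(-\Delta/2,\Delta/2)$ and independent of $\mathcal F_k$ (since $d_h$ is i.i.d.\ and independent of $(\bar x_o,\bar d)$, and $v(k)$ and $(w-v)^{k-1}$ are functions of $\mathcal F_k$-measurable quantities plus earlier dither only). Therefore, conditioned on $\mathcal F_k$, the quantity $w(k)-v(k) = \cal{Q}(v(k)+d_h(k)) - d_h(k) - v(k)$ is, by Schuchman's condition applied with the fixed value $v(k)$, uniform on $(-\Delta/2,\Delta/2)$ and independent of $\mathcal F_k$. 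Since this conditional law does not depend on $\mathcal F_k$, it follows that $w(k)-v(k)$ is unconditionally uniform and independent of $\big((w-v)^{k-1},\bar x_o,\bar d\big)$, which closes the induction. Finally, since $y_c$ and $u_c$ are obtained from $s(k)=\cal{Q}(v(k)+d_h(k))$ by the invertible entropy coder/decoder pair $\mathscr H_k,\mathscr H_k^{-1}$, the identity $\hat s(k)=\cal{Q}(v(k)+d_h(k))$ holds, so $w(k)-d_h(k)=\hat s(k)-d_h(k)$ reproduces exactly the dithered-quantizer error and nothing in the coding layer alters the statistics of $w-v$.

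The main obstacle is bookkeeping the conditioning carefully enough to be sure that $v(k)$ is independent of $d_h(k)$ while simultaneously carrying along independence of $(\bar x_o,\bar d)$ through every step; the strict properness of the $w\to v$ map is precisely what makes this possible and must be used explicitly. Everything else is an invocation of the standard subtractive-dither lemma and the invertibility of the entropy coder, so I would keep those parts brief and, following the excerpt's own style, point to the proof of Theorem 5.3 in \cite{sideos11-TAC} for the routine details.
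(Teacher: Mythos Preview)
The paper does not prove this lemma at all: it is stated verbatim as Theorem~5.3 of \cite{sideos11-TAC} and simply cited, with no proof environment following the statement. Your proposal is therefore not comparable to a proof in the paper, but it is a correct reconstruction of the argument behind the cited result: the induction on $k$, the use of strict properness of the $w\to v$ map to ensure $v(k)$ is measurable with respect to $(\bar x_o,\bar d,d_h^{k-1},(w-v)^{k-1})$ and hence independent of $d_h(k)$, and the invocation of Schuchman's condition at each step are exactly the ingredients used in \cite{sideos11-TAC}. Since the present paper only quotes the result, your sketch already goes further than what the paper itself provides.
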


\begin{figure}
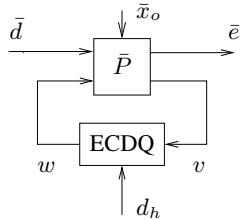

\centering
\input dq-feedback.pstex_t
\caption{Entropy coded dithered quantizer inside a feedback
loop.}\label{fig:ecdq-feedback}
\end{figure}

It follows that any coding scheme described by
\eqref{eq:def-linear-coding} and \eqref{eq:ecdq}, with dither
as in Lemma \ref{lema:ruido-en-ecdq}, is a linear source coding
scheme. Any such coding scheme will be referred to as an
ECDQ-based linear source coding scheme. Figure
\ref{fig:proposed-coding-scheme} depicts an ECDQ-based linear
source coding scheme where we have made explicit the fact that,
since the channel is error-free and has zero delay, $\hat{s}=s$
and, thus, $w$ can be obtained at the encoder side without
making use of any additional feedback channel.

The next lemma gives an upper bound on the (operational)
average data rate in an ECDQ-based linear source coding scheme.

\begin{lemma}\label{lema:tasa-en-ecdq}
Consider the NCS of Figure \ref{fig:ncs-general} and suppose
that that Assumption \ref{assu:G} holds.  Then, there exists an
ECDQ-based linear source coding scheme such that the resulting
NCS is AWSS.  For any such coding scheme,
\begin{align}
\mathscr{R} < \frac{1}{2} \log{\lp 1+ \frac{\sigma_v^2}{\sigma_q^2} \rp} +
    \frac{1}{2}\log{ \lp \frac{ 2\pi e }{12} \rp} + \log{2},
\end{align}
where $\sigma_v^2$ is the steady-state variance of the
auxiliary signal $v$, and $\sigma_q^2=\Delta^2/12$ is the
linear source coding scheme noise variance (see
\eqref{eq:def-linear-coding}).
\end{lemma}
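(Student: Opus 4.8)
The plan is to prove the statement in two independent steps: first to exhibit one ECDQ-based linear source coding scheme that renders the NCS of Figure~\ref{fig:ncs-general} AWSS, and then to establish the rate bound, which will then apply to every such scheme. For the existence step I would note that, by Assumption~\ref{assu:G} (and the feasibility analysis of Appendix~\ref{ap:factibilidad}), there is a pair of proper LTI filters $(F,L)$ --- for instance $F=1$, $L_w=0$, and $L_y$ a proper output-feedback controller that internally stabilizes $P_{22}$ --- for which the auxiliary loop of Figure~\ref{fig:lazo-con-linear-coder}, driven by a white-noise channel of any finite variance $\sigma_{q'}^2=\Delta^2/12$, is internally stable and well-posed; internal stability, the whiteness of $d$ and $q'$, and the finiteness of the variance of $x_o$ then force $(x,y,u,d)$ to be jointly second-order AWSS. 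I would then replace that channel by a subtractively-dithered ECDQ with step $\Delta$ and i.i.d.\ dither uniform on $(-\Delta/2,\Delta/2)$ as in \eqref{eq:ecdq} and invoke Lemma~\ref{lema:ruido-en-ecdq}; its strict-properness hypothesis holds automatically, because the open-loop map from $w$ to $v$ equals $L_wz^{-1}+L_yP_{22}F$, which is strictly proper whenever $L_w,L_y,F$ are proper and $P_{22}$ is strictly proper. The lemma gives that $q\treq w-v$ is i.i.d., uniform on $(-\Delta/2,\Delta/2)$, independent of $(x_o,d)$, and of variance $\Delta^2/12$; since this leaves the second-order statistics of $(x,y,u,d)$ unchanged, the resulting NCS is AWSS and the scheme is an ECDQ-based linear source coding scheme.

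For the rate bound, I would fix any ECDQ-based linear source coding scheme rendering the NCS AWSS. Then $q=w-v$ is i.i.d., uniform on $(-\Delta/2,\Delta/2)$ and independent of $(x_o,d)$, so $\sigma_q^2=\Delta^2/12$; and, since $v(k)$ is a function of $(x_o,d,q^{k-1})$ by causality (using that $P_{22}$ is strictly proper), $q(k)$ is independent of $v(k)$. Because $\mathscr{H}_k$ encodes $s(k)$ with a prefix-free code matched to the conditional law of $s(k)$ given $d_h(k)$, Shannon-code lengths $\lceil -\log_2 p \rceil$ give, strictly, $R(k)<H(s(k)\mid d_h(k))+\log 2$ (recall $R(k)$ is measured in nats). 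Since the channel is error-free, $\hat{s}=s$ and $w(k)=s(k)-d_h(k)$, so given $d_h(k)$ the map $s(k)\mapsto w(k)$ is bijective and the dithered-quantization entropy identity (cf.~\cite{zamfed92}) yields
\begin{align}
H(s(k)\mid d_h(k))=H(w(k)\mid d_h(k))=I\big(v(k);w(k)\big)=h(w(k))-h(q(k))=h(w(k))-\log\Delta .
\end{align}
Using $q(k)\perp v(k)$, $\mathrm{Var}(w(k))=\sigma_{v(k)}^2+\Delta^2/12$, so the Gaussian maximum-entropy bound gives $h(w(k))\le\tfrac12\log\!\big(2\pi e(\sigma_{v(k)}^2+\sigma_q^2)\big)$; combining this with $\log\Delta=\tfrac12\log(12\sigma_q^2)$ yields
\begin{align}\label{eq:rate-per-sample-proposal}
R(k)<\frac12\log\!\Big(1+\frac{\sigma_{v(k)}^2}{\sigma_q^2}\Big)+\frac12\log\frac{2\pi e}{12}+\log 2 .
\end{align}
Since the NCS is AWSS, $v$ is AWSS and $\sigma_{v(k)}^2\to\sigma_v^2$, so averaging \eqref{eq:rate-per-sample-proposal} over $k$ and letting $k\to\infty$ (a Ces\`{a}ro limit of a convergent sequence) delivers the claimed bound with ``$\le$''. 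To upgrade to ``$<$'', I would use that $w(k)=v(k)+q(k)$ has a density smoothed by a uniform one and is therefore non-Gaussian, so the maximum-entropy step retains a strictly positive gap in the limit; this, together with $\mathscr{R}\le\limsup_k R(k)$, gives the strict inequality.

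I expect the main obstacle to be the passage to the steady state: justifying $\sigma_{v(k)}^2\to\sigma_v^2$ together with the convergence of the differential entropies $h(w(k))$, and making sure the strict inequality survives the limit. The entropy chain and the Gaussian bound are otherwise routine. A convenient way to secure the strictness is to observe that, for fixed $\Delta$ and bounded $\sigma_{v(k)}^2$, the slack in either the prefix-code bound or the maximum-entropy bound is bounded away from zero uniformly in $k$.
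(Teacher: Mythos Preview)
Your argument is correct and essentially complete; the only soft spot is the passage to the strict inequality in the limit, but your closing remark (that the convolution of any density with a uniform one is never Gaussian, so the max-entropy slack is bounded away from zero uniformly in $k$ once $\sigma_{v(k)}^2$ is bounded) is enough to close it.

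The paper, however, proceeds quite differently. For existence it makes the symmetric choice to yours---$L$ such that $v=y$ (i.e., $L_w=0$, $L_y=1$) and $F$ an arbitrary internally stabilizing controller for $P$---and then, rather than deriving the per-sample chain $R(k)<H(s(k)\mid d_h(k))+\log 2$, invoking the Zamir--Feder identity $H(s(k)\mid d_h(k))=h(w(k))-\log\Delta$, and the Gaussian max-entropy bound as you do, it reduces the whole rate inequality to an existing result. The observation it isolates is that, for \emph{any} filters $F,L$ rendering the NCS internally stable and well-posed, the open-loop map from $w$ to $v$ is stabilizable by unity feedback (trivially, because closing that loop with unity feedback reproduces the original stable NCS). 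This is exactly the standing hypothesis of Corollary~5.3 in \cite{sideos11-TAC}, which, together with Lemma~\ref{lema:ruido-en-ecdq}, gives the rate bound directly. In short: the paper packages your Zamir--Feder/max-entropy/Ces\`aro computation into a single citation by recognising the ``unity-feedback stabilizable'' structure; your proof is a self-contained reconstruction of what that cited corollary does. Your route has the advantage of being readable without the companion paper; the paper's route is two lines long once the structural observation about unity-feedback stabilizability is made.
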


\begin{proof}
Consider the NCS of Figure \ref{fig:ncs-general} and assume
that the source coding scheme is linear.   Since Assumption
\ref{assu:G} holds, there exist proper LTI filters $L$ and $F$
such that the resulting NCS is internally stable and well-posed
(one possibility is to choose $L$ such that $v=y$ and to pick
any $F$ which internally stabilizes $P$).  For any such choice
of filters, the open loop system linking $w$ with $v$ is
stabilizable with unity feedback.   Our claim now follows
immediately upon using Corollary 5.3 in \cite{sideos11-TAC} and
the description for the coding noise in Lemma
\ref{lema:ruido-en-ecdq}.\fin
\end{proof}

We are now in a position to prove the main results of this
section:

\begin{theorem}\label{teo:rate-distortion-linear}
Consider the NCS of Figure \ref{fig:ncs-general} and suppose
that Assumption \ref{assu:G} holds.   If $D\in ( D_{\inf},
\infty )$, then there exists an ECDQ-based linear source coding
scheme satisfying Assumption \ref{assu:coding-schemes} such
that the resulting NCS is AWSS, $\sigma_e^2\leq D$, and
\begin{align}\label{eq:upper-bound}
\mathscr{R} < \frac{1}{2} \log{ \Big( 1 +
\gamma'(D) \Big) } + \frac{1}{2}\log{ \lp \frac{
2\pi e }{12} \rp} + \log{2},
\end{align}
where $\gamma'(D)$ is as in \eqref{eq:lower-bound-snr}.
\end{theorem}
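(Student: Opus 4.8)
The plan is to combine the achievability result of Lemma~\ref{lema:tasa-en-ecdq} with the characterization of $\gamma'(D)$ in Corollary~\ref{coro:lower-bound-snr}, being careful about two points: (i) that the filter pair achieving (near) $\gamma'(D)$ for the auxiliary scheme of Figure~\ref{fig:lazo-con-linear-coder} can be used in an ECDQ-based \emph{linear} source coding scheme embedded in the NCS of Figure~\ref{fig:ncs-general}, and (ii) that such a scheme satisfies Assumption~\ref{assu:coding-schemes}. First I would fix $D\in(D_{\inf},\infty)$ and invoke feasibility (Appendix~\ref{ap:factibilidad}) together with the definition of $\gamma'(D)$ in \eqref{eq:lower-bound-snr} to obtain, for any $\epsilon>0$, proper LTI filters $F_\epsilon, L_\epsilon$ and a noise variance rendering Figure~\ref{fig:lazo-con-linear-coder} internally stable and well-posed, with $\sigma_{e'}^2\le D$ and $\sigma_{v'}^2/\sigma_{q'}^2 \le \gamma'(D)+\epsilon$.

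Next I would transfer this to the NCS: construct an ECDQ-based linear source coding scheme, per \eqref{eq:def-linear-coding}--\eqref{eq:ecdq} and Lemma~\ref{lema:ruido-en-ecdq}, using the filters $F_\epsilon, L_\epsilon$ and choosing the quantizer step $\Delta$ so that $\sigma_q^2=\Delta^2/12$ matches (or is arbitrarily close to) the $\sigma_{q'}^2$ from the auxiliary problem. Because $q$ in an ECDQ-based scheme is i.i.d., independent of $(x_o,d)$, and the plant, disturbance and initial state satisfy Assumption~\ref{assu:G}, the closed loop around Figure~\ref{fig:ncs-general} with these filters is statistically indistinguishable (in first- and second-order moments, and AWSS-ness) from the auxiliary loop of Figure~\ref{fig:lazo-con-linear-coder} with $q'$ white of the same variance. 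Hence the resulting NCS is AWSS, $\sigma_e^2 = \sigma_{e'}^2\le D$, and $\sigma_v^2/\sigma_q^2 = \sigma_{v'}^2/\sigma_{q'}^2 \le \gamma'(D)+\epsilon$. I also need to check that the ECDQ decoder is invertible given $u^i$ and side information (the dither), which follows from the structure of \eqref{eq:ecdq} as noted after Lemma~\ref{lema:ruido-en-ecdq}, so Assumption~\ref{assu:coding-schemes} holds.

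Then I would apply Lemma~\ref{lema:tasa-en-ecdq} to this coding scheme to get
\begin{align}
\mathscr{R} < \tfrac{1}{2}\log\lp 1 + \tfrac{\sigma_v^2}{\sigma_q^2}\rp + \tfrac{1}{2}\log\lp\tfrac{2\pi e}{12}\rp + \log 2
\le \tfrac{1}{2}\log\lp 1 + \gamma'(D) + \epsilon\rp + \tfrac{1}{2}\log\lp\tfrac{2\pi e}{12}\rp + \log 2 .
\end{align}
Since the right-hand side is continuous in $\epsilon$ and the bound is a strict inequality with a fixed positive slack, letting $\epsilon\to 0$ (or simply choosing $\epsilon$ small enough that the strict inequality with $\gamma'(D)$ still holds) yields \eqref{eq:upper-bound}.

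\textbf{Main obstacle.} The delicate step is (i): ensuring that the near-optimal pair $(F_\epsilon,L_\epsilon)$ for the auxiliary scheme — where the channel is an AWGN channel with one-step-delayed feedback — can be realized as an \emph{ECDQ-based linear} source coding scheme around an actual digital channel, with the coding noise $q$ being genuinely i.i.d.\ and independent of $(x_o,d)$, and with the open-loop map from $w$ to $v$ strictly proper as required by Lemma~\ref{lema:ruido-en-ecdq}. The structural constraints \eqref{eq:def-linear-coding} (a unit delay in the $w\!\to\!v$ path, $F$ and $L$ proper) were designed precisely so that Figure~\ref{fig:lazo-con-linear-coder} and the linear-coding model coincide, so this should go through, but one must confirm that the strict-properness hypothesis of Lemma~\ref{lema:ruido-en-ecdq} is inherited — this is exactly where invoking (if needed) the biproperness of $F$ from Lemma~\ref{lema:w-blanco}, or the $v=y$ construction used in the proof of Lemma~\ref{lema:tasa-en-ecdq}, makes the argument clean; I would lean on the internal-stability/well-posedness bookkeeping already set up in the proofs of Theorem~\ref{teo:lower-bound} and Lemma~\ref{lema:w-blanco} rather than redoing it.
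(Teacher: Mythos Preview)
Your proposal is essentially the paper's proof: pick near-optimal $(F_\epsilon,L_\epsilon,\sigma_{q'}^2)$ for $\gamma'(D)$, implement them as an ECDQ-based linear scheme with $\Delta^2/12=\sigma_{q'}^2$, use the second-order equivalence with Figure~\ref{fig:lazo-con-linear-coder} to get AWSS and $\sigma_e^2\le D$, apply Lemma~\ref{lema:tasa-en-ecdq}, and send $\epsilon\to 0$ through the strict inequality.

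One point is slightly misplaced. The strict-properness hypothesis of Lemma~\ref{lema:ruido-en-ecdq} for the $w\to v$ map is \emph{automatic} here: in \eqref{eq:def-linear-coding} the $w$-feedback enters $L$ through an explicit $z^{-1}$, and $P_{22}$ is strictly proper by Assumption~\ref{assu:G}, so the open-loop $w\to v$ transfer is strictly proper regardless of whether $F_\epsilon$ is biproper. The real reason the paper forces $F_\epsilon$ biproper (via Lemma~\ref{lema:w-blanco} together with Jensen's inequality, which shows the SNR does not increase) is the \emph{decoder-invertibility} clause of Assumption~\ref{assu:coding-schemes}: to recover $u_c^k$ from $(u^k,d_h^k)$ one first needs $w^k$ from $u^k$, and that inversion requires $F_\epsilon$ biproper with deterministic initial state. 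Your sentence ``follows from the structure of \eqref{eq:ecdq}'' is not quite enough on its own---\eqref{eq:ecdq} lets you recover $u_c^k$ from $(w^k,d_h^k)$, but getting $w^k$ from $u^k$ is exactly where biproperness is used. Once you relocate the biproperness argument to this step (and note that Lemma~\ref{lema:w-blanco} plus Jensen lets you assume it without loss of generality at the level of the SNR), the proof is complete and matches the paper.
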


\begin{proof}
Since $D>D_{\inf}$, the problem in \eqref{eq:lower-bound-snr}
is feasible (see Appendix \ref{ap:factibilidad}).  Thus, there
exist proper LTI filters $L$ and $F$ rendering the feedback
system of Figure \ref{fig:lazo-con-linear-coder} internally
stable and well-posed, and $\sigma_{q'}^2\in\R^+$, such that,
in the scheme of Figure \ref{fig:lazo-con-linear-coder}, and
for any $\epsilon>0$, $\sigma_{e'}^2\leq D$ and
\begin{align}\label{eq:e-factible}
\frac{\sigma_{v'}^2}{\sigma_{q'}^2} \leq \gamma'(D) + \epsilon.
\end{align}
Denote the above choices for $L$, $F$ and $\sigma_{q'}^2$ by
$L_{\epsilon}, F_{\epsilon}$ and $\sigma_{\epsilon'}^2$,
respectively.  Given Lemma \ref{lema:w-blanco} and Jensen's
inequality, $F_{\epsilon}$ can be assumed to be biproper
without loss of generality.

\begin{figure}
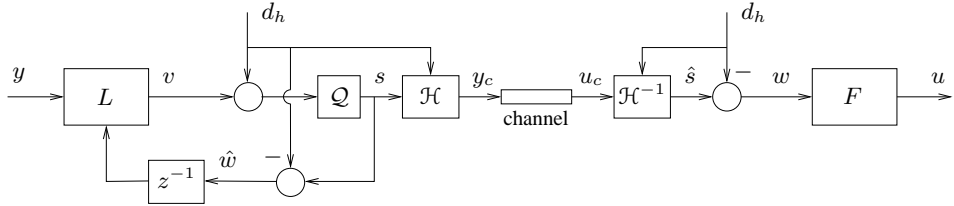
 \centering
\input ecdq_mas_filtros.pstex_t
\caption{Proposed source coding scheme.  If the channel is noiseless and delay free, then $\hat{s}=s$ and $\hat{w}=w$. }\label{fig:proposed-coding-scheme}
\end{figure}

Consider the NCS of Figure \ref{fig:ncs-general} and assume
that the link between $y$ and $u$ is given by an ECDQ-based
linear source coding scheme with parameters
$(L,F,\Delta)=(L_{\epsilon}, F_{\epsilon}, (
12\sigma_{\epsilon'}^2 )^{1/2})$, and set the initial states of
$L_{\epsilon}$, $F_{\epsilon}$ and of the channel feedback
delay to zero. The definition of $L_{\epsilon}$, $F_{\epsilon}$
and $\sigma_{\epsilon'}^2$, together with Lemma
\ref{lema:ruido-en-ecdq}, guarantee by construction that the
NCS that results from the above choice of coding scheme is AWSS
and that, in addition, the plant output $e$ and the auxiliary
signal $v$ in \eqref{eq:def-linear-coding} have steady-state
variances satisfying
\begin{align}\label{eq:clave-para-upper-bound}
\sigma_e^2 = \sigma_{e'}^2 \leq D, \quad \frac{\sigma_v^2}{\sigma_q^2} = \frac{\sigma_{v'}^2}{\sigma_{q'}^2} \leq \gamma'(D) + \epsilon.
\end{align}
By Lemma \ref{lema:tasa-en-ecdq} we also conclude that, for the
above described ECDQ-based linear source coding scheme, the
average expected length of the channel input $y_c$ satisfies,
for some suitable $\delta>0$,
\begin{multline}\label{eq:ineqs_R_ECDQ}
\mathscr{R}  < \frac{1}{2} \log \lp 1 +  \frac{\sigma_v^2}{\sigma_q^2} \rp + \frac{1}{2}\log{ \lp \frac{ 2\pi e }{12} \rp} + \log{2} - \delta \\
\leq \frac{1}{2} \log \Big( 1 +  \gamma'(D) + \epsilon \Big) +
\frac{1}{2}\log{ \lp \frac{ 2\pi e }{12} \rp} + \log{2} -
\delta
\end{multline}
where we have used \eqref{eq:clave-para-upper-bound}.  Thus,
inequality \eqref{eq:upper-bound} follows upon choosing a
sufficiently small $\epsilon>0$.

To complete the proof, we now show that the proposed source
coding scheme satisfies Assumption \ref{assu:coding-schemes}.
Except for the invertibility of the decoder, the properties of
$d_h$ guarantee that Assumption \ref{assu:coding-schemes} holds
(note that, in our case, $S_{\enc}=S_{\dec}=d_h$).  Since
$F_{\epsilon}$ is biproper and its initial state is
deterministic, knowledge of $u^k$ is equivalent to knowledge of
$w^k$.  If one now proceeds as in the proof of Corollary 5.1 in
\cite{sideos11-TAC}, it follows that one can recover $u_c^k$
from $w^k$ upon knowledge of $d_h^k$.  Thus, upon knowledge of
$S_{\dec}^k=d_h^k$, one can recover $u_c^k$ from $u^k$ and the
decoder is invertible as required.  \fin
\end{proof}

\begin{remark}
The proof of Theorem \ref{teo:rate-distortion-linear} is
constructive.  Indeed, it suggests a way to build a source
coding scheme that renders the resulting NCS of Figure
\ref{fig:ncs-general} AWSS, and achieves  $\sigma_e^2\leq D$
while incurring an average data rate that is upper bounded by
the right-hand side of \eqref{eq:upper-bound}. \fin
\end{remark}

Theorem \ref{teo:rate-distortion-linear} shows that the lower
bound on $\mathscr{R}(D)$ derived in Corollary
\ref{coro:lower-bound-snr} is tight up to $\frac{1}{2}\log{ \lp
\frac{ 2\pi e }{12} \rp} + \log{2}$ nats per sample (i.e.,
tight up to approximately $1.254$ bits per sample).   Whilst
the lower bound in Corollary \ref{coro:lower-bound-snr} was
derived by using an information-theoretic argument, the upper
bound in \eqref{eq:upper-bound} hinges on a specific source
coding scheme that uses suitably chosen LTI filters in
conjunction with an ECDQ.  It follows from the discussion in
Section V-B in \cite{sideos11-TAC} that the gap between the
derived upper and lower bounds on $\mathscr{R}(D)$ arises from
two facts:  First, ECDQs introduce a coding noise which is
uniform and not Gaussian (this amounts to the additional
$\frac{1}{2}\log{ \lp \frac{ 2\pi e }{12} \rp}$ nats per
sample).  Second, the proposed coding scheme works on a
sample-by-sample basis and practical entropy-coders are not
perfectly efficient \cite[Chapter 5]{covtho06} (this amounts to
an additional $\log{2}$ nats per sample).  
We emphasize, however, that the above gap corresponds to a
worst case gaps and it can be significantly smaller in practice
(see Section~\ref{sec:ejemplo}).

A key aspect of our results is that they are stated in terms of
the solution to the constrained SNR minimization problem in
\eqref{eq:lower-bound-snr}.  As such, they highlight the role
played by SNR constraints in networked control systems, and
thus complement, e.g., \cite{silpul11} where the connection
between SNR constraints and other communication constraints has
been explored.   As already mentioned before, a way of
obtaining a solution to the problem in
\eqref{eq:lower-bound-snr} will be discussed in Section
\ref{sec:computing-and-approximations} below.

%
%

\begin{remark}
It is well-known \cite{naieva04} that, when causal source
coding schemes of arbitrary complexity are employed, it is
possible to mean-square stabilize an LTI plant if and only if
the corresponding average data rate $\mathscr{R}$ is larger
than $\sum_{i=1}^{n_p} \log{\abs{p_i}}$, where $p_i$ denotes
the $i^{th}$ unstable plant pole.  On the other hand, it is
straightforward use Theorem 17 in \cite{sigoqu10-auto}, in
conjunction with the proof of Theorem
\ref{teo:rate-distortion-linear}, to show that any plant
satisfying Assumption \ref{assu:G} can be stabilized in the
sense of Definition \ref{def:estabilidad} by incurring an
average data rate that satisfies
\begin{align}
\mathscr{R} < \sum_{i=1}^{n_p} \log{\abs{p_i}}
+ \frac{1}{2}\log{ \lp \frac{ 2\pi e }{12} \rp} + \log{2}.
\end{align}
The above observation shows, for plants satisfying Assumption
\ref{assu:G}, that it suffices to use an ECDQ-based linear
source coding scheme to achieve stability at rates which are at
most $\frac{1}{2}\log{ \lp \frac{ 2\pi e }{12} \rp} + \log{2}$
nats per sample away from the absolute minimal average data
rate compatible with stability (see also
\cite{sideos11-TAC}).\fin
\end{remark}

\section{Computations and Approximate Implementation}\label{sec:computing-and-approximations}

\subsection{Computing the bounds on $\mathscr{R}(D)$}\label{sec:computing-the-bounds}
The bounds on $\mathscr{R}(D)$ presented in Corollary
\ref{coro:lower-bound-snr} and Theorem
\ref{teo:rate-distortion-linear} are functions of the minimal
SNR $\gamma'(D)$ in \eqref{eq:lower-bound-snr}.  In this
section, we show that the problem of finding $\gamma'(D)$ is
equivalent to an SNR constrained optimal control problem
previously addressed in \cite{sideos10-necsys, derost12,
johann11}.

To proceed, we first note that a straightforward manipulation
based on Figure \ref{fig:lazo-con-linear-coder} yields, for any
$\sigma_{q'}^2\in\R^+$ and any proper LTI filters $F$ and $L$
that render the LTI feedback system of Figure
\ref{fig:lazo-con-linear-coder} internally stable and
well-posed,
\begin{align}
\label{eq:gamma-explicito} \gamma' & = \norma{ S-1 }_2^2
    + \sigma_{q'}^{-2} \norma{L_yP_{21}S}_2^2 = \norma{S}_2^2 +\sigma_{q'}^{-2} \norma{L_y S P_{21} }_2^2 - 1,\\
\label{eq:performance-explicito} \sigma_{e'}^2 & = \norma{P_{11} + P_{12} K(1-P_{22}K)^{-1} P_{21} }_2^2
    + \norma{P_{12} FS }\sigma_{q'}^2,
\end{align}
where $S$ is as in \eqref{eq:def-S}, $K\treq FL_y \lp
1-L_wz^{-1} \rp^{-1}$, $L_w$ and $L_y$ are such that $L = [ \,
L_w \; L_y \,]$ (see \eqref{eq:linear-coding-explicito}), and
we have used that fact that, since $F$ and $L$ are internally
stabilizing, $L_w$ is proper and $P_{22}$ is assumed to be
strictly proper, $S$ is stable, $S(\infty)=1$ and hence
$\norma{S-1}_2^2 = \norma{S}_2^2-1$.

We now define, for the feedback system of Figure
\ref{fig:lazo-con-linear-coder}, the auxiliary problem of
finding
\begin{align}\label{eq:optimal-performance}
J'(\Gamma) \treq \inf_{\gamma'\leq \Gamma} \sigma_{e'}^2,
\end{align}
where the minimization is performed with respect to all proper
LTI filters $F$ and $L$, and auxiliary noise variances
$\sigma_{q'}^2\in\R^+$, that render the LTI feedback system of
Figure \ref{fig:lazo-con-linear-coder} internally stable and
well-posed.  Given our assumptions, if the plant $P$ is
unstable, then the problem in \eqref{eq:optimal-performance} is
feasible if and only if $\Gamma>\Gamma_{\inf}$, where
$\Gamma_{\inf}$ denotes the infimal SNR $\gamma'$ that is
compatible with mean-square stability in the feedback system of
Figure \ref{fig:lazo-con-linear-coder}  (see
\cite{sigoqu10-auto, frmibr07}).  If $P$ is stable, then the
problem in \eqref{eq:optimal-performance} is feasible if and
only if $\Gamma \geq \Gamma_{\inf}=0$.

\begin{lemma}\label{lema:activas}
Consider the problems of finding both $\gamma'(D)$  and
$J'(\Gamma)$ in \eqref{eq:lower-bound-snr} and
\eqref{eq:optimal-performance}, respectively.   Assume, in
addition, that the plant $P$ is such that $P_{21}\neq 0$ and
$P_{12}\neq 0$.
\begin{enumerate}
\item If $D\in (D_{\inf}, \infty)$, the plant is unstable, or
stable with $D < \norma{P_{11}}_2^2$, then $\gamma'(D)$ is
a strictly decreasing function of $D$ and the inequality
constraint in \eqref{eq:lower-bound-snr} is active at the
optimum.
\item If $\Gamma>\Gamma_{\inf}$, then $J'(\Gamma)$ is a strictly decreasing
function of $\Gamma$ and the inequality constraint in
\eqref{eq:optimal-performance} can be assumed to be active
at the optimum without loss of generality.
\end{enumerate}
\end{lemma}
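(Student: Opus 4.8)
The plan is to reduce both parts to the one-parameter families obtained from \eqref{eq:gamma-explicito}--\eqref{eq:performance-explicito}. Fix proper LTI filters $(F,L)$ that render the loop of Figure~\ref{fig:lazo-con-linear-coder} internally stable and well-posed; as seen in the proof of Lemma~\ref{lema:w-blanco}, internal stability and well-posedness depend only on $(F,L)$ and not on $\sigma_{q'}^2$. Reading off \eqref{eq:gamma-explicito} and \eqref{eq:performance-explicito},
\[
\gamma' = a + b\,\sigma_{q'}^{-2}, \qquad \sigma_{e'}^2 = c + e\,\sigma_{q'}^2 ,
\]
where $a\treq\norma{S-1}_2^2$, $b\treq\norma{L_yP_{21}S}_2^2$, $c\treq\norma{P_{11}+P_{12}K(1-P_{22}K)^{-1}P_{21}}_2^2$ and $e\treq\norma{P_{12}FS}_2^2$ are nonnegative and depend only on $(F,L)$ (here $S$ is as in \eqref{eq:def-S} and $K\treq FL_y(1-L_wz^{-1})^{-1}$). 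So, for fixed $(F,L)$, moving $\sigma_{q'}^2$ over $\R^+$ moves $\gamma'$ and $\sigma_{e'}^2$ continuously and in opposite directions. The first step is to show $b>0$ and $e>0$ along every triple feasible for $\gamma'(D)$ (resp. near-optimal for $J'(\Gamma)$): if $F=0$ then $u'\equiv0$, and if $L_y=0$ then \eqref{eq:linear-coding-explicito} makes $w'$, hence $u'$, a filtered copy of $q'$ that is independent of $(x_o,d)$; in either case $\sigma_{e'}^2\ge\norma{P_{11}}_2^2$, and moreover the loop is not internally stable when $P$ is unstable. Hence $F\ne0$ and $L_y\ne0$: when $P$ is unstable this is forced by internal stability, and when $P$ is stable it is forced because $\sigma_{e'}^2\ge\norma{P_{11}}_2^2$ would violate $\sigma_{e'}^2\le D<\norma{P_{11}}_2^2$ (resp. contradict near-optimality, since under $P_{12}\ne0$, $P_{21}\ne0$ an infinitesimal amount of feedback strictly reduces $c$ below $\norma{P_{11}}_2^2$ at a second-order cost in $\sigma_{e'}^2$). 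Since $P_{12}\ne0$, $P_{21}\ne0$ and $S$ is a nonzero rational function, $F\ne0$ and $L_y\ne0$ yield $b,e>0$.

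Given $b,e>0$, activeness is immediate: if a feasible triple has $\sigma_{e'}^2<D$, increasing $\sigma_{q'}^2$ until $\sigma_{e'}^2=D$ preserves internal stability and strictly decreases $\gamma'$; dually, if $\gamma'<\Gamma$, decreasing $\sigma_{q'}^2$ until $\gamma'=\Gamma$ strictly decreases $\sigma_{e'}^2$. Hence appending the equality $\sigma_{e'}^2=D$ to \eqref{eq:lower-bound-snr} (resp. $\gamma'=\Gamma$ to \eqref{eq:optimal-performance}) leaves the infimum unchanged, and the inequality constraint may be taken active. Solving the equality for $\sigma_{q'}^2$ gives, for each feasible $(F,L)$, the reduced objective $a+be/(D-c)$ (resp. $c+be/(\Gamma-a)$), a strictly decreasing convex function of $D$ on $(c,\infty)$ (resp. of $\Gamma$ on $(a,\infty)$). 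Thus $\gamma'(D)$ and $J'(\Gamma)$ are lower envelopes of families of strictly decreasing functions and are, at least, nonincreasing.

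The main obstacle is upgrading ``nonincreasing'' to ``strictly decreasing'', since a pointwise infimum of strictly decreasing functions need not be strictly decreasing. I would argue by contradiction: suppose $\gamma'(D_1)=\gamma'(D_2)=:\gamma^\star$ with $D_{\inf}<D_1<D_2$, and take triples feasible for the $D_1$-problem with $\sigma_{e'}^2=D_1$ and $\gamma'_n\to\gamma^\star$. Raising $\sigma_{q',n}^2$ so that $\sigma_{e'}^2=D_2$ produces $D_2$-feasible triples with SNR $\gamma'_n-\frac{b_ne_n(D_2-D_1)}{(D_1-c_n)(D_2-c_n)}\ge\gamma'(D_2)=\gamma^\star$; since $c_n\in[0,D_1)$, the sandwich forces $b_ne_n\to0$, and then $\gamma'_n=a_n+\frac{b_ne_n}{D_1-c_n}\to\gamma^\star$ forces $a_n\to\gamma^\star$ and (along a subsequence) $c_n\to\bar c\le D_1$. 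Sliding $\sigma_{q'}^2$ along these same $(F_n,L_n)$ to reach $\sigma_{e'}^2=\bar c+\delta$ would then show $\gamma'(\bar c+\delta)\le\gamma^\star$ for every small $\delta>0$, which together with monotonicity and with $\gamma'(D)\to\Gamma_{\inf}$ as $D\to\infty$ would make $\gamma'$ constant equal to $\Gamma_{\inf}$ on an interval---contradicting the strict positivity of the attainable performance gap below $\Gamma_{\inf}$ established in \cite{sigoqu10-auto,frmibr07}. Part~2 is the mirror image, with the roles of the SNR budget $\Gamma$ and the performance level $D$ interchanged: $\gamma'(\cdot)$ and $J'(\cdot)$ are the two coordinate descriptions of the boundary of the common achievable set $\{(\gamma',\sigma_{e'}^2)\}$, so once one is known to be strictly decreasing and continuous the other follows. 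I expect the technical soft spots to be the degeneracy bookkeeping in the last step (the cases $c_n\to\bar c<D_1$ versus $c_n\to D_1$, and controlling whether $\sigma_{q',n}^2$ stays bounded) and making the limit $\gamma'(D)\to\Gamma_{\inf}$ rigorous via Appendix~\ref{ap:factibilidad} and the cited SNR results.
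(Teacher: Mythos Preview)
Your approach to activeness---fix $(F,L)$, write $\gamma'=a+b\sigma_{q'}^{-2}$ and $\sigma_{e'}^2=c+e\sigma_{q'}^2$, argue $b,e>0$ by ruling out $F=0$ and $L_y=0$, then slide $\sigma_{q'}^2$---is exactly what the paper does. The paper phrases it as showing $D-\kappa>0$ ``at the optimum'' (with $\kappa$ your $c$) and then rewriting the constraint as $\sigma_{q'}^{-2}\ge e/(D-\kappa)$, but the content is the same.

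Where you diverge is on strict monotonicity. You correctly flag that a pointwise infimum of strictly decreasing functions need not be strictly decreasing, and you attempt a sequential contradiction argument (forcing $b_ne_n\to0$, then $a_n\to\gamma^\star$, then propagating constancy down to $\Gamma_{\inf}$). The paper does \emph{not} engage with this subtlety at all: after substituting the active constraint into \eqref{eq:gamma-explicito} to get, in your notation, $a+be/(D-c)$, it simply asserts that ``the result follows immediately.'' So the extra machinery you build is not matching a hidden step in the paper; the paper just omits the issue, implicitly treating the optimum as attained and reasoning on a fixed minimizer.

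Your contradiction route is plausible but, as you yourself note, has soft spots (the $c_n\to D_1$ case, boundedness of $\sigma_{q',n}^2$, and turning constancy near $\Gamma_{\inf}$ into an actual contradiction requires knowing $\gamma'(D)>\Gamma_{\inf}$ for finite $D$, not merely $\gamma'(D)\to\Gamma_{\inf}$). If you want to close the gap cleanly rather than match the paper's level of rigor, the simplest route is to invoke attainment of the infimum---the problem is shown elsewhere in the paper (Section~\ref{sec:computing-the-bounds}, via \cite{derost12,johann11}) to reduce to a convex program---after which the one-parameter sliding argument on an actual minimizer gives strict decrease directly, with no sequences needed.
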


\begin{proof}
\begin{enumerate}
\item Consider the definition of $\gamma'(D)$ in \eqref{eq:lower-bound-snr}
and define $\kappa\treq \norma{P_{11} + P_{12}
K(1-P_{22}K)^{-1} P_{21} }_2^2$.  We first show that our
assumptions imply that $D-\kappa > 0$ at the optimum.
(Since $D>D_{\inf}$, $D-\kappa$ is always non negative for
any feasible set of parameters $F,L$ and
$\sigma_{q'}^2>0$.) Indeed, assume on the contrary that
$D-\kappa=0$ at the optimum. Given
\eqref{eq:performance-explicito}, this would imply that
$\sigma_{q'}^2=0$ or $P_{12}FS=0$ at the optimum. Given our
assumptions and the definition of $\gamma'(D)$, only $F=0$
is possible.  However, $F=0$ is not compatible with
internal stability when the plant is unstable.  In the
stable plant case, $F=0$ implies $\sigma_{e'}^2 =
\norma{P_{11}}_2^2$ (note that $F=0\implies K=0$).  The
latter equality is however unfeasible since, by assumption,
$D < \norma{P_{11}}_2^2$.

Given the above, if $D>D_{\inf}$, and the plant is unstable
or is stable with $D < \norma{P_{11}}_2^2$, then $D-\kappa
> 0 $ at the optimum. Hence, the performance constraint
$\sigma_{e'}^2\leq D$ in the definition of $\gamma'(D)$ is
equivalent to
\begin{align}\label{eq:cota-inferior-1/sigmaq2}
\frac{1}{\sigma_{q'}^2} \geq \frac{\norma{P_{12}FS}_2^2}{D - \kappa}
\end{align}
at the optimum (see \eqref{eq:lower-bound-snr} and
\eqref{eq:performance-explicito}). Since $\gamma'$ is a
nondecreasing function of $\sigma_{q'}^{-2}$, it follows
that the optimal choice for $\sigma_{q'}^2$ is such that
the inequality constraint is active at the optimum.

We now show that $\gamma'(D)$ is strictly decreasing in
$D$. Our assumptions guarantee that $K\neq 0$ and hence
$FL_y\neq 0$.  By using \eqref{eq:cota-inferior-1/sigmaq2}
in \eqref{eq:gamma-explicito}, and the fact that the
optimal choice for $\sigma_{q'}^2$ achieves equality in
\eqref{eq:cota-inferior-1/sigmaq2}, the result follows
immediately.
\item Consider the definition of $J'(\Gamma)$ in
\eqref{eq:optimal-performance}.   By using an argument
similar the one used in Part 1 above, it follows that our
assumptions imply that one can assume, without loss of
generality, that $\Gamma+1-\norma{S} > 0$ at the optimum
(see also \cite[pages 103--104]{johann11}).
Our claims now follow by proceeding as in Part 1)
above.\fin
\end{enumerate}
\end{proof}

Lemma \ref{lema:activas} shows, for almost all cases of
interest,\footnote{If $\Gamma=\Gamma_{\inf}$, then either the
problem of finding $J'(\Gamma)$ is unfeasible (unstable plant
case) or $\Gamma=0$ (stable plant case).  In the latter case,
no information can be conveyed through the channel. On the
other hand, if the plant is stable and $D\geq
\norma{P_{11}}_2^2$ then $\gamma'(D)=0$ and it is optimal to
leave the plant in open loop.  The above cases are clearly
uninteresting and have thus been omitted from the discussion in
Lemma \ref{lema:activas}.} that the inequality constraints in
the optimization problems defining both $\gamma'(D)$ and
$J'(\Gamma)$ can be assumed to be active at the optimums,
without loss of generality.  This fact is exploited below to
relate the solutions to these problems.

\begin{theorem}\label{teo:problemas-equivalentes}
Consider the optimization problems defining both $\gamma'(D)$
and $J'(\Gamma)$ in \eqref{eq:lower-bound-snr} and
\eqref{eq:optimal-performance}, respectively.   Assume that
$\Gamma>\Gamma_{\inf}$, $D>D_{\inf}$, that the plant $P$ is
such that $P_{12}\neq 0$ and $P_{21}\neq 0$, and that, if $P$
is stable, then $D<\norma{P_{11}}_2^2$ holds. Then,
\begin{align}
D = J'(\gamma'(D)), \quad \Gamma = \gamma'(J'(\Gamma)).
\end{align}
\end{theorem}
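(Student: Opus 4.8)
The plan is to exploit the monotonicity results from Lemma \ref{lema:activas} to show that $\gamma'(\cdot)$ and $J'(\cdot)$ are mutually inverse on the relevant ranges. First I would establish the inequality $D \geq J'(\gamma'(D))$. By Lemma \ref{lema:activas}, Part 1, the performance constraint $\sigma_{e'}^2 \leq D$ is active at the optimum defining $\gamma'(D)$; hence there exist (or there is a sequence of) feasible parameters $(F,L,\sigma_{q'}^2)$ achieving $\gamma' $ arbitrarily close to $\gamma'(D)$ while $\sigma_{e'}^2 = D$. Feeding such parameters into the problem defining $J'(\Gamma)$ with $\Gamma = \gamma'(D)$ shows they are feasible for that problem and yield $\sigma_{e'}^2 = D$, so $J'(\gamma'(D)) \leq D$. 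The reverse inequality $J'(\gamma'(D)) \geq D$ I would argue by contradiction: if $J'(\gamma'(D)) = D_1 < D$, then by Part 2 of Lemma \ref{lema:activas} there are feasible parameters with $\gamma' \leq \gamma'(D)$ (arbitrarily close) achieving $\sigma_{e'}^2 = D_1 < D$; but then since $\gamma'(\cdot)$ is \emph{strictly decreasing}, $\gamma'(D_1) > \gamma'(D)$, while the very same parameters witness $\gamma'(D_1) \leq \gamma'(D)$ — a contradiction. This establishes $D = J'(\gamma'(D))$.

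The identity $\Gamma = \gamma'(J'(\Gamma))$ follows by an entirely symmetric argument, swapping the roles of the two problems and using the two halves of Lemma \ref{lema:activas} in the opposite order: the constraint $\gamma' \leq \Gamma$ is active at the optimum defining $J'(\Gamma)$ (Part 2), giving feasibility for the $\gamma'(\cdot)$ problem at level $D = J'(\Gamma)$ and hence $\gamma'(J'(\Gamma)) \leq \Gamma$; and strict monotonicity of $J'(\cdot)$ in $\Gamma$ (again Part 2) rules out $\gamma'(J'(\Gamma)) = \Gamma_1 < \Gamma$ by the same contradiction scheme. Alternatively, once $D = J'(\gamma'(D))$ is known for all admissible $D$, and one checks that $\gamma'(\cdot)$ maps the admissible $D$-range onto the admissible $\Gamma$-range, the second identity is just the statement that a strictly monotone surjection composed with its one-sided inverse on the other side is the identity; I would use whichever of these two routes is cleaner given the precise feasibility ranges.

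The main obstacle I anticipate is the careful handling of infima that are not attained. Neither optimization problem is asserted to have a minimizer, so "the constraint is active at the optimum" must be read in the limiting sense, and the contradiction arguments have to be run with $\epsilon$-suboptimal feasible points rather than optimal ones. Concretely, when I take near-optimal parameters for one problem and insert them as feasible points for the other, I only get one-sided inequalities up to an $\epsilon$ that must then be sent to zero; combining this with the strict-monotonicity contradiction requires that the strict decrease be quantitative enough (or that continuity of $\gamma'(\cdot)$ and $J'(\cdot)$ be invoked) to survive the limit. A secondary point to check is that the feasibility hypotheses stay consistent under the substitutions — i.e., that $D = J'(\Gamma) > D_{\inf}$ whenever $\Gamma > \Gamma_{\inf}$, and symmetrically $\gamma'(D) > \Gamma_{\inf}$ whenever $D > D_{\inf}$ (with the stable-plant caveat $D < \norma{P_{11}}_2^2$) — so that Lemma \ref{lema:activas} actually applies at the composed arguments. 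Once those bookkeeping issues are dispatched, the two identities drop out.
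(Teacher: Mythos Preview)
Your proposal is correct and follows essentially the same route as the paper: one inequality via active constraints (Lemma~\ref{lema:activas}) feeding near-optimal points of one problem into the other, and the reverse inequality by a contradiction with strict monotonicity. The paper carries this out for $\Gamma = \gamma'(J'(\Gamma))$ and declares the other identity symmetric, exactly as you do in reverse; your flagged obstacle about handling non-attained infima with $\epsilon$-suboptimal points is precisely the bookkeeping the paper performs.
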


\begin{proof}
We will only prove that $\Gamma=\gamma'(J'(\Gamma))$.  Our
remaining claim follows by using a similar argument. Since
$\Gamma>\Gamma_{\inf}$, the problem of finding $J'(\Gamma)$ is
feasible. Thus, for any $\epsilon>0$, there exist proper LTI
filters $F_{\epsilon}$ and $L_{\epsilon}$, and
$\sigma_{\epsilon'}^2\in\R^+$, that render the system of Figure
\ref{fig:lazo-con-linear-coder} internally stable and
well-posed, and guarantee that
\begin{align}\label{eq:para-demo-equivalencias}
\evalnoll{\sigma_{e'}^2}{(F,L, \sigma_{q'}^2)}{(F_{\epsilon}, L_{\epsilon}, \sigma_{\epsilon'}^2)}
    \leq J'(\Gamma) + \epsilon,
        \quad \evalnoll{\gamma'}{(F,L,\sigma_{q'}^2)}
            {( F_{\epsilon}, L_{\epsilon}, \sigma_{\epsilon'}^2)} = \Gamma,
\end{align}
where we have used Lemma \ref{lema:activas} to write an
equality in the SNR constraint.  Since the inequality in
\eqref{eq:para-demo-equivalencias} is valid for any
$\epsilon>0$, it follows that there exist a feasible point for
the problem of finding $\gamma'(J'(\Gamma))$ and, in addition,
that $\gamma'(J'(\Gamma))\leq \Gamma$.  The proof of our second
claim would follow if we show that $\gamma'(J'(\Gamma)) <
\Gamma$ is impossible.    Assume that $\gamma'(J'(\Gamma)) <
\Gamma$ is indeed true.  Then, there exist decision variables
such that $\gamma'=\breve{\gamma} < \Gamma$ and $\sigma_{e'}^2
= J'(\Gamma)$.  (Again, we use Lemma \ref{lema:activas} to
write an equality in the constraint defining
$\gamma'(J'(\Gamma))$.)  Thus, we conclude that
\begin{align}
\inf_{\gamma' = \breve{\gamma} } \sigma_{e'}^2  = J'(\breve{\gamma}) \leq J'(\Gamma),
\end{align}
where the first equality follows from the fact that the
constraint is active at the optimum when calculating
$J'(\breve{\gamma})$.  The above inequality contradicts the
fact that, given our assumptions, Lemma \ref{lema:activas}
guarantees that $J'(\Gamma)$ is a strictly decreasing function
of $\Gamma$.  The proof is thus completed.\fin
\end{proof}

Theorem \ref{teo:problemas-equivalentes} shows, for almost all
cases of interest, that the problem of finding $\gamma'(D)$ in
\eqref{eq:lower-bound-snr} is equivalent to that of finding
$J'(\Gamma)$ in \eqref{eq:optimal-performance}. The latter
problem was shown to be equivalent to a convex problem in
\cite{sideos10-necsys}.  For doing so, \cite{sideos10-necsys}
showed that the problem of finding $J'(\Gamma)$ is equivalent
to the open-loop causal rate-distortion problem which was shown to be convex in~\cite{derost12}.  Shortly thereafter, the convexity of the
SNR constrained optimal control problem in
\eqref{eq:optimal-performance} was re-derived independently in
\cite{johann11}, where a formulation more amenable for numerical
computations is presented.  
We will thus not delve into the details on how to numerically
find $\gamma'(D)$ here, and refer the interested reader to
Section 3.3. in \cite{johann11} for details.

\subsection{Approximating the behavior of an ECDQ in practice}\label{sec:approx}
The previous subsection explained how a numerical
characterization of $\gamma'(D)$ can be obtained. Here, we will
briefly comment on the implementation of a source-coding scheme
which achieves the desired level of performance $D$, while
incurring an average data rate $\mathscr{R}$ satisfying
\eqref{eq:upper-bound}.  In principle, such coding scheme can
be designed as follows (see proof of Theorem
\ref{teo:rate-distortion-linear}):
\begin{itemize}
\item Use the procedure in \cite{johann11} and Theorem \ref{teo:problemas-equivalentes}
to find the filters $L$ and $F$, and the auxiliary noise
variance $\sigma_{q'}^2$, which solve the problem of
finding $\gamma'(D)$.
\item Use these filters in the ECDQ-based linear source coding
scheme of Figure \ref{fig:proposed-coding-scheme} and set
all initial states to zero.  Choose the ECDQ quantization
step as $\Delta = \lp 12\sigma_{q'}^2 \rp^{1/2}$, an i.i.d.
dither signal uniformly distributed on $(-\Delta/2,
\Delta/2)$ and independent of $(x_o, d)$, and appropriate
entropy coder and decoder mappings $\mathscr{H}_k$ and
$\mathscr{H}_k^{-1}$ (using, for instance, the Huffman
algorithm \cite{covtho06}).
\end{itemize}

Implementing an ECDQ requires the availability of the dither at
both the encoder and the decoder sides. Additionally, the
entropy coder $\mathscr{H}_k$  needs to generate a binary word
for each input value according to the conditional probability
of that input, given the current dither value.  The above
requirements are impossible to meet exactly in practice.
Indeed, the first one is tantamount to requiring an additional
perfect channel for being able to communicate the dither from
the encoder to the decoder. The second one would require an
uncountable number of dictionaries \cite{covtho06}, one for
each dither value.

Leaving finite range and precision issues aside, the behavior
of an ECDQ can be approximated in practice by using
synchronized uniformly distributed pseudo-random dither
sequences, generated at both the encoder and the decoder from
the same seed, and using entropy-coders and decoders which work
conditioned upon a uniformly quantized version of the dither.
By using such an approach, all signals in the NCS of Figure
\ref{fig:ncs-general}, except for the channel input and output,
will have the same statistics as if an ideal ECDQ was employed.
For each possible quantized dither value, one can build the
corresponding conditional dictionary in $\mathscr{H}_k$ by
using, for example, the Huffman coding
algorithm~\cite{covtho06}. The conditional statistics of the
quantizer outputs needed for this purpose, can be approximated
by the corresponding stationary statistics which can be estimated
empirically by simulation. 

\section{A numerical example}\label{sec:ejemplo}
%
%
%
%

\begin{figure}[htpb]
\centering
{\includegraphics[scale=0.55]{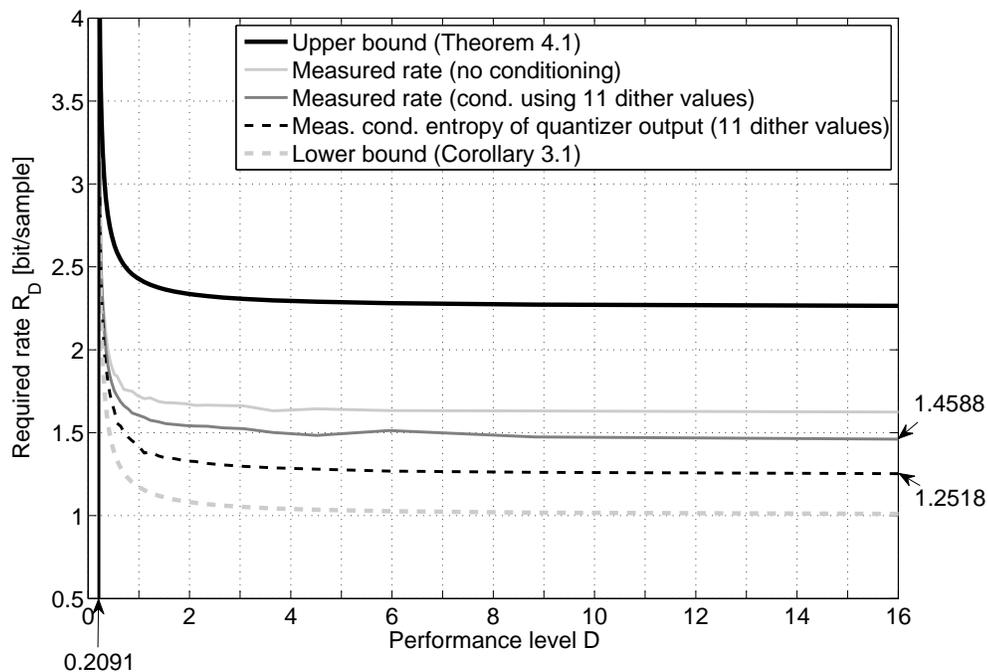}}
\caption{Bounds on the minimal average data rate required to attain a given closed-loop performance level.}
\label{fig:example_1}
\end{figure}

Assume that the plant $P$ in Figure \ref{fig:ncs-general} is
such that
\begin{equation}
y = \dfrac{0.165}{(z-2)(z-0.5789)}(u + d), \quad e = y
\end{equation}
and that $(x_o, d)$ is Gaussian with $d$ being unit-variance
white noise.  
By using the results of Sections \ref{sec:lower}
and \ref{sec:upper-linear-coding} we computed upper and lower
bounds on $\mathscr{R}(D)$ for several values of $D > D_{\inf}
= 0.2091$.   We also simulated an actual ECDQ-based linear
source coding scheme for each considered value for $D$. 
To that end, we followed the suggestions at the end of Section
\ref{sec:approx} and simulated ECDQs where the dither is
uniform and perfectly known at both ends of the channels, 
and where the entropy-coders work conditioned upon a quantized
version of the dither.  The results are presented in Figure
\ref{fig:example_1}.  In that figure we plot our upper and
lower bounds, and several other curves which report simulation
results.  
All simulation results (referred to as ``Measured''
in Figure \ref{fig:example_1}) are averages over twenty $10^4$-samples-long realizations.   
In particular, ``Measured rate (no
conditioning)'' corresponds to the average data rate in a case
where an empirically-tuned entropy-coder is employed which does
not make use of the knowledge of the dither values.  Even in
this case our upper bound proves to be rather loose. The curve
``Measured rate (cond. using 11 dither values)'' corresponds to
the rate achieved when using and entropy coder that works
conditioned upon $11$ uniformly-quantized dither values.   As
expected our results show that conditioning reduces the
incurred average data-rate.  (Simulations suggested that using
more than $11$ quantized dither values brings only negligible benefits in
terms of rate reduction.)   The curve ``Measured entropy of
quantizer output (11 dither values)'' corresponds to an
empirical estimate of the conditional entropy of the quantizer
output $s$, given the quantized dither values.

Our results show that our upper bound is loose.  This is
consistent with the fact that our upper bound was derived by
using worst case considerations.  The gap between the measured
rate (with conditioning) and our lower bound is about $0.45$
bits per sample, which is smaller than the worst case gap
$\frac{1}{2}\log\lp \frac{2\pi e}{12} \rp + \log 2$ nats per
sample (about $1.254$ bits per sample). On the other hand the
estimated conditional entropy of the quantizer output, given
the dither values, is about $0.25$ bits per sample above the
lower bound.  This implies that, in our simulations, the $0.46$
bit per sample gap is composed by about $0.21$ bits per sample
due to the inefficiency of the considered entropy coders, and
by about $0.25$ bits per sample due to the fact that the ECDQ
generates uniform and not Gaussian noise.

Our results show that, as expected, achieving a closed loop
performance arbitrarily close to the best non networked
performance $D_{\inf}$ requires arbitrarily high data rates.
Interestingly, however, for this example, it suffices to use
less that $3$ bits per sample to achieve a performance that is
essentially identical to the best non networked performance. It
is also interesting to observe that our bounds, and the
measured average data-rates, converge rapidly as $D\to \infty$.
Thus, whilst achieving an average data-rate arbitrarily close
to the minimal rate for stabilization severely compromises
performance \cite{naieva04}, our results suggest that the
performance loss incurred when forcing the average data rate to
be low might be modest in some cases.

\section{Conclusions}\label{sec:conclusions}
This paper has studied networked control systems subject to
average data rate constraints.  In particular, we have obtained
a characterization of the minimal average data rate that
guarantees a prescribed level of performance.  Our results have
been derived for LTI plants that have one scalar control input,
one scalar sensor output, and that are subject to Gaussian
disturbances and initial states.  No constraints besides
causality have been imposed on the considered source coding
schemes which yielded a universal lower bound on the minimal
average data rate that guarantees a given performance level.
Such bound was derived by noting that optimal performance-rate
tradeoffs can be described by source coding schemes that behave
like a set of LTI filters and a source of additive white noise.
Such insight was then used as motivation for building a source
coding scheme capable of achieving rates which are less than
$\frac{1}{2}\log\lp \frac{2\pi e}{12} \rp + \log 2$ nats per
sample away from our derived lower bound, while satisfying the
desired performance level constraint. Such coding schemes are
based upon entropy dithered quantizers and constitute
conceptually simple coding schemes.  A numerical example has
been include to illustrate our proposal.

Future work should focus on multiple input and multiple output
plant models, multichannel architectures, and on ways of
reducing the gap between the derived upper and lower bounds on
the minimal average data-rate that guarantees a given
performance level.



\appendix

\setcounter{lemma}{0}
\renewcommand{\thelemma}{\Alph{subsection}.\arabic{lemma}}

\setcounter{theorem}{0}
\renewcommand{\thetheorem}{\Alph{subsection}.\arabic{lemma}}

\section{Appendix}

\subsection{Three consequences of assuming $D>D_{\inf}$}\label{ap:factibilidad}
In this appendix we show that $D>D_{\inf}$ is sufficient for
the optimization problems in \eqref{eq:def-RD},
\eqref{eq:lower-bound} and \eqref{eq:lower-bound-snr} to be
feasible.  We will make extensive use of the definition of
$\gamma'(D)$ in \eqref{eq:lower-bound-snr}, the related
equations in \eqref{eq:gamma-explicito} and
\eqref{eq:performance-explicito}, the conventions regarding the
feedback scheme of Figure \ref{fig:lazo-con-linear-coder} made
on the paragraph preceding Theorem \ref{teo:lower-bound}, and
the definitions of ECDQs and ECDQ-based linear source coding
schemes made in Section \ref{sec:upper-linear-coding}.
\begin{figure}
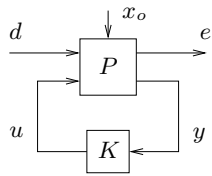

\centering
\input lazo_estandar.pstex_t
\vspace{-4mm}
\caption{Standard one-degree-of-freedom feedback loop around the plant $P$.}\label{fig:lazo-estandar}
\vspace{-8mm}
\end{figure}

Consider the feedback system of Figure \ref{fig:lazo-estandar},
where $P$, $d$ and $x_o$ satisfy Assumption \ref{assu:G}, and
the controller $K$ is such that $u(k)=\mathscr{K}_k (y^k)$ for
some arbitrary mappings $\mathscr{K}_k$.  Since $P$ is LTI and
all the involved random variables are Gaussian, it follows from
well-known results \cite{astrom70} that
\begin{align}
D_{\inf} = \inf_{K\in\cal{S}} \sigma_e^2,
\end{align}
where $\cal{S}$ is the set of all proper LTI filters which
render the the feedback system of Figure
\ref{fig:lazo-estandar} internally stable and well-posed.  Our
assumptions on $P$ guarantee that the above problem is
feasible.

The fact that $D>D_{\inf}$
and that the problem of finding $D_{\inf}$ is feasible, implies that for every $\epsilon\in(0,D-D_{\inf})$, there exists $K_{0}\in\cal{S}$
such that, in Figure~\ref{fig:lazo-estandar},
$\sigma^{2}_{e_{0}}\eq \evalnoll{\sigma_e^2}{K}{K_{0}}\leq D_{\inf}+\epsilon <D$.   
Consider the feedback
scheme of Figure \ref{fig:lazo-con-linear-coder} with $F=1$ and
$L=L_{0}$, where $L_{0}$ is such that $v = K_{0} \,
y$. 
Since $K_{0}\in\cal{S}$, the above choice renders the
feedback system of Fig.~\ref{fig:lazo-con-linear-coder}
internally stable and well-posed for any additive noise
variance $\sigma_{q'}^2\in\R^+$.  
This means that the resulting variance of $v'$, say $\sigsq_{v0}$, will be finite.
It also means that if $q'$ in Fig.~\ref{fig:lazo-con-linear-coder} 
is zero-mean AWGN with variance $\sigma_{q'}$, then the variances of $e'$ and of $v'$ will increase to 
$\sigsq_{e_{0}} + \beta_{0}^{(e)} \sigma_{q'}$, 
and
$\sigsq_{v_{0}} +\beta^{(v)}_{0}\sigsq_{q'}$, respectively,
for some finite factors $\beta^{(e)}_{0},\beta^{(v)}_{0}\geq 0$ which depend only upon $K_{0}$.
As a consequence, 
for every $D>D_{\inf}$, there exists $K_{0}\in\cal{S}$ such that 
in Fig.~\ref{fig:lazo-con-linear-coder} and for the above choice of
filters, 
$\evalnoll{\sigma_{e}'^2}{(F,L,\sigma_{q'}^2)}{(1,L_{0}, \sigsq_{q} )} < D_{\inf} + \tfrac{2}{3}(D-D_{\inf})$,  
by picking $\epsilon=(D-D_{\inf})/3$ and $q'$ as zero mean AWGN with variance $\sigsq_{q}= \tfrac{1}{3}(D-D_{\inf})/\beta_{0}$.
It immediately follows that the
above choice of parameters is also such that, in Figure
\ref{fig:lazo-con-linear-coder},
\begin{align}\label{eq:gammaD-factible}
\evalnoll{\gamma'}{(F,L,\sigma_{q'}^2)}{(1, L_{0},
\sigma_{q}^2)} 
<
\frac
{
3\beta_{0}^{(e)}\beta_{0}^{(v)}\sigsq_{v_{0}}
}
{D-D_{\inf} }
< \infty.
\end{align}
The latter inequality shows that the problem of finding
$\gamma'(D)$ in~\eqref{eq:lower-bound-snr} is feasible for any
$D>D_{\inf}$ (indeed, feasible while yielding all signals in the system jointly Gaussian). 
Now, from Jensen’s inequality and the concavity
of $\log$, it also follows from \eqref{eq:gammaD-factible} that
the problem of finding $\phi_u'(D)$ in \eqref{eq:lower-bound}
is feasible for any $D>D_{\inf}$.

We end this section by showing that the problem of finding
$\mathscr{R}(D)$ in \eqref{eq:def-RD} is also feasible when
$D>D_{\inf}$.  To that end, it suffices to consider an
ECDQ-based linear source coding scheme to link $y$ and $u$ in
Figure \ref{fig:ncs-general}, with parameters $\Delta= \lp
12\sigma_{\inf'}^2 \rp^{1/2}$, $F=1$ and $L=L_{\inf}$. Indeed,
by exploiting the properties of the latter choice of parameters
(see preceding paragraph), our claim follows by proceeding as
in the proof of Theorem \ref{teo:rate-distortion-linear} to
show that the above defined ECDQ-based linear source coding
scheme satisfies Assumption \ref{assu:coding-schemes}, renders
the NCS of Figure~\ref{fig:ncs-general} AWSS, and achieves
$\sigma_e^2<D$ at a finite average data rate $\mathscr{R}$.

\subsection{Auxiliary information-theoretic definitions and results}\label{ap:info-theory}
The following definitions and facts are standard and, unless
otherwise stated, can be found in \cite{covtho06}.   We assume
all random variables to have well defined (joint) probability
density functions (pdfs). The pdf of $x$ ($x,y$) is denoted
$f(x)$ ($f(x,y)$). $f(x|y)$ refers to the conditional pdf of
$x$, given $y$. $\meancr{x}{\cdot}$ denotes mean with respect
to the distribution of $x$.

The differential entropy of $x$ is defined via $h(x) \triangleq
-\meancr{x}{\log{f(x)}}$.  The conditional differential entropy
of $x$, given $y$, is defined via $h(x|y) \triangleq
-\meancr{x,y}{\log{f(x|y)}}$.  The mutual information between
two random variables $x$ and $y$ is defined via $I(x;y)\treq
-\meancr{x,y}{\log{ \lp f(x)f(y) / f(x,y) \rp }}$.  The
conditional mutual information between $x$ and $y$, given $z$,
is defined via $I(x;y|z) \treq I(x,z; y) - I(z;y)$.  The
following are properties of the above quantities:
\begin{enumerate}[({Property} 1)]
\item $I(x;y |z) = h(x|z) - h(x|y,z)$.
\item   If $f$ is a deterministic function, then
$h(x+f(y)|y)=h(x|y)$.
\item  If $x$ and $y$ are independent, then
$h(x|y)=h(x)$.
\item  $h(x_0, \cdots, x_{n-1}) = \sum_{i=0}^{n-1} h(x_i|
x_0, \cdots, x_{i-1})$, where $x_{-1}$ can be taken to be a deterministic constant, in which case $h(x_0|x_{-1})=h(x_{0})$.
\end{enumerate}

\begin{lemma}\label{lema:gauss-minimiza}
Assume that $(x,z)$ are jointly second-order random variables,
$x$ is Gaussian, and $z$ is arbitrarily distributed. Then,
$I(x;z)\geq (x;z_G)$, where $z_G$ is such that $(x,z_G)$ are
jointly Gaussian and have the same first- and second-order
(cross-) moments as $(x,z)$.
\end{lemma}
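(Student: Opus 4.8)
The plan is to reduce the claim to the single inequality $h(x\mid z)\le h(x\mid z_G)$, and then to bound $h(x\mid z)$ above by the differential entropy of the linear minimum mean-square-error (LMMSE) estimation error, which is Gaussian --- and hence entropy-maximal --- precisely in the jointly Gaussian case $(x,z_G)$. First I would observe that, since $x$ has the same marginal law in the pair $(x,z)$ as in $(x,z_G)$, one has $I(x;z)=h(x)-h(x\mid z)$ and $I(x;z_G)=h(x)-h(x\mid z_G)$ with a common term $h(x)$ (finite by the second-order assumption, barring the singular-covariance cases disposed of below). Hence $I(x;z)-I(x;z_G)=h(x\mid z_G)-h(x\mid z)$, and it suffices to prove $h(x\mid z)\le h(x\mid z_G)$.

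Next I would set $n\treq\dim x$, write $\mu_x,\mu_z$ for the (common) means and $P_x,P_z,P_{xz}$ for the (common) covariances and cross-covariance, and --- assuming for the moment $P_z\succ 0$ --- introduce the affine LMMSE estimator $\hat x\treq\mu_x+P_{xz}P_z^{-1}(z-\mu_z)$ together with the error $\tilde x\treq x-\hat x$, whose covariance $\Pi\treq P_x-P_{xz}P_z^{-1}P_{zx}$ depends only on the second-order moments shared by the two pairs. For $(x,z)$, since $\hat x$ is a deterministic function of $z$, Property 2 in Appendix \ref{ap:info-theory} gives $h(x\mid z)=h(\tilde x+\hat x\mid z)=h(\tilde x\mid z)$; conditioning does not increase differential entropy, so $h(\tilde x\mid z)\le h(\tilde x)$; and the maximum-entropy property of the Gaussian distribution \cite{covtho06} yields $h(\tilde x)\le\tfrac12\log\lp(2\pi e)^{n}\det{\Pi}\rp$. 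For the jointly Gaussian pair $(x,z_G)$, the conditional law of $x$ given $z_G$ is Gaussian with covariance exactly $\Pi$ (the same formula) and independent of the conditioning value, so $h(x\mid z_G)=\tfrac12\log\lp(2\pi e)^{n}\det{\Pi}\rp$. Chaining these relations gives $h(x\mid z)\le h(x\mid z_G)$, hence $I(x;z)\ge I(x;z_G)$.

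Finally I would clean up the degenerate cases: if $P_z$ is singular, replace $z$ by a maximal affinely independent sub-vector, which alters neither $I(x;z)$ nor $I(x;z_G)$; and if the resulting error covariance $\Pi$ (or $P_x$ itself) is singular, then some nonzero linear functional of $x$ coincides, almost surely, with an affine function of $z$ in \emph{both} pairs --- because $\tilde x$ has covariance $\Pi$ whatever its distribution --- so that $I(x;z)=I(x;z_G)=+\infty$ and the inequality holds trivially. I expect the only real care to be needed in exactly this bookkeeping around rank-deficient covariances and the well-definedness (finiteness) of the differential entropies; the core inequality chain above is routine once the auxiliary variable $\tilde x$ has been identified as the LMMSE error.
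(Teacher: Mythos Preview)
Your argument is correct and is the standard LMMSE/maximum-entropy route to this inequality: reduce to $h(x\mid z)\le h(x\mid z_G)$, shift by the affine LMMSE estimator, drop the conditioning, and invoke the Gaussian maximum-entropy bound, which is attained with equality in the jointly Gaussian pair. The bookkeeping for singular $P_z$ and singular $\Pi$ is also handled adequately (one minor refinement: when $\Pi$ is singular you implicitly need $a^T P_x a>0$ for the chosen null vector $a$, but this is guaranteed once you have first reduced to full-rank $P_x$, which you do).

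The paper itself does not give a proof at all; it simply writes ``Immediate from the proof of Lemma~1 in \cite{derost08}.'' Your proposal therefore supplies what the cited reference would contain, and is in fact more informative than the paper's one-line citation. There is no methodological difference to compare: you have reconstructed the expected argument rather than taken an alternative route.
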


\begin{proof}
Immediate from the proof of Lemma 1 in \cite{derost08}.\fin
\end{proof}

\begin{figure}
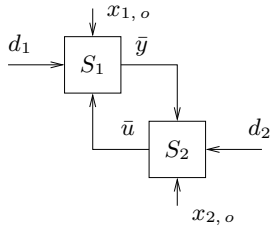

\centering
\input lazo_general.pstex_t
\vspace{-4mm}
\caption{Generic feedback system.}\label{fig:feedback-generico}
\vspace{-8mm}
\end{figure}

\begin{lemma}\label{lema:directed-estado}
Consider the generic feedback system of Figure
\ref{fig:feedback-generico}, where $S_1$ is an arbitrary (hence
possibly nonlinear and time-varying) causal dynamic system with
initial state $x_{1, \, o}$ and disturbance $d_1$, such that
$\bar{y}(k)=S_{1,k}(x_{1, \, o}, d_1^k, \bar{u}^{k-1})$ for
some (possibly nonlinear and time-varying) deterministic
mapping $S_{1,k}$, and $S_2$ is an arbitrary causal dynamic
system with disturbance $d_2$ and initial state $x_{2,o}$, such
that $\bar{u}(k)=S_{2,k}(x_{2,o}, d_2^k, \bar{y}^k)$ for some
(possibly nonlinear and time-varying) deterministic mapping
$S_{2,k}$.  If $(x_{2,o}, d_2)$ are jointly independent of $(
x_{1, \, o}, d_1)$, then
\begin{align}\label{eq:information-flux}
\sum_{i=0}^{k-1} I( \bar{u}(i); \bar{y}^i | \bar{u}^{i-1} ) = I( x_{1, \, o}, d_1^{k-1}; \bar{u}^{k-1} ).
\end{align}
\end{lemma}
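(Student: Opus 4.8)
The plan is to reduce \eqref{eq:information-flux} to a single conditional-independence property that encodes the causal structure of the loop, and then to close the argument with a routine application of the chain rule. Write $A\treq(x_{1,\,o},d_1^{k-1})$ and $B\treq(x_{2,o},d_2^{k-1})$, so that the hypothesis is $A\indep B$. By the chain rule for mutual information, $I(A;\bar u^{k-1})=\sum_{i=0}^{k-1}I(A;\bar u(i)\mid\bar u^{i-1})$, so it suffices to establish, for each $i\in\{0,\dots,k-1\}$, the per-stage identity $I(A;\bar u(i)\mid\bar u^{i-1})=I(\bar u(i);\bar y^i\mid\bar u^{i-1})$.

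Two structural facts will be used. First, since $\bar y(j)=S_{1,j}(x_{1,\,o},d_1^j,\bar u^{j-1})$ with $d_1^j\subseteq d_1^{k-1}$ whenever $j\le i\le k-1$, the vector $\bar y^i$ is a deterministic function of $(A,\bar u^{i-1})$; hence $I(\bar y^i;\bar u(i)\mid A,\bar u^{i-1})=0$. Second, since $\bar u(i)=S_{2,i}(x_{2,o},d_2^i,\bar y^i)$ and $i\le k-1$, once $\bar y^i$ is fixed $\bar u(i)$ becomes a deterministic function of $B$. Expanding $I(A,\bar y^i;\bar u(i)\mid\bar u^{i-1})$ by the chain rule in the two possible orders and invoking the first fact gives
\begin{align}
I(A;\bar u(i)\mid\bar u^{i-1}) = I(\bar y^i;\bar u(i)\mid\bar u^{i-1}) + I(A;\bar u(i)\mid\bar y^i,\bar u^{i-1}),
\end{align}
so, by the second fact, the per-stage identity follows once we show the conditional independence $A\indep B\mid(\bar y^i,\bar u^{i-1})$ for every $i$.

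I would prove this conditional independence by induction on $i$, and this is the step I expect to require the most care. The base case $i=0$ is immediate because $\bar y^0=S_{1,0}(x_{1,\,o},d_1(0))$ is a function of $A$ alone (and $\bar u^{-1}$ is deterministic), and conditioning on a function of $A$ cannot break $A\indep B$. For the inductive step, observe that passing from $(\bar y^i,\bar u^{i-1})$ to $(\bar y^{i+1},\bar u^i)$ amounts to conditioning additionally first on $\bar u(i)$ --- which, given $(\bar y^i,\bar u^{i-1})$, is a function of $B$ by the second fact --- and then on $\bar y(i+1)=S_{1,i+1}(x_{1,\,o},d_1^{i+1},\bar u^i)$ --- which, given $(\bar y^i,\bar u^i)$, is a function of $A$. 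Using the elementary fact that a conditional independence $X\indep Y\mid Z$ survives further conditioning on any function of $(Z,Y)$ (resp.\ of $(Z,X)$), the hypothesis $A\indep B\mid(\bar y^i,\bar u^{i-1})$ propagates to $A\indep B\mid(\bar y^{i+1},\bar u^i)$, completing the induction. Combining this with the two structural facts yields the per-stage identity, and summing over $i$ with the chain rule gives \eqref{eq:information-flux}. The delicate point is the bookkeeping in the induction: one must check that every quantity appended to the conditioning set is, at the stage it is appended, a function of the variables of just one side of the loop, so that the prior independence of the two sides is never destroyed.
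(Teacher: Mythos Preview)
Your argument is correct. The two-way chain-rule expansion of $I(A,\bar y^i;\bar u(i)\mid\bar u^{i-1})$, together with the inductive propagation of $A\indep B\mid(\bar y^i,\bar u^{i-1})$, is exactly the right mechanism, and your bookkeeping is sound: at each step the variable being appended to the conditioning set is, given what is already there, a function of one side only, so the weak-union axiom for conditional independence applies cleanly.

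By contrast, the paper does not give a proof at all: it simply cites an external result (Theorem~1 in \cite{desios13}) and declares the lemma immediate. Your route is therefore genuinely different in presentation---a self-contained, elementary derivation rather than a pointer to the literature. What you gain is transparency: a reader sees precisely where the causality assumptions on $S_1$ and $S_2$ and the independence of $(x_{2,o},d_2)$ from $(x_{1,o},d_1)$ are consumed, and no external source is needed. What the paper's approach buys is brevity, at the cost of requiring the reader to consult another paper. Substantively, your argument is the standard one and is very likely what the cited theorem's proof contains; you have simply unpacked it in place.
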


\begin{proof}
Immediate from~\cite[Theorem~1]{desios13}. \fin
\end{proof}

Lemma \ref{lema:directed-estado} corresponds to a stronger
version of Theorem 5.1 in \cite{mardah05}.  Indeed, the latter
result makes use of additional assumptions on system $S_2$,
does not take side information into account, and only shows
that the left hand side of \eqref{eq:information-flux} is lower
bounded by the corresponding right-hand side.

 \bibliographystyle{\BibPath/IEEEtran}
 \bibliography{\BibPath/IEEEabrv,%
\BibPath/bibliografia2%
}

\end{document}